\documentclass{article}


\usepackage[final,nonatbib]{neurips_2022}
\usepackage[numbers]{natbib}




\usepackage[utf8]{inputenc} 
\usepackage[T1]{fontenc}    
\usepackage{hyperref}       
\usepackage{url}            
\usepackage{booktabs}       
\usepackage{amsfonts}       
\usepackage{nicefrac}       
\usepackage{microtype}      
\usepackage{xcolor}         
\usepackage{amsmath}
\usepackage{mathtools}
\usepackage{amsthm}
\usepackage{amsmath}
\usepackage{amssymb}
\usepackage{bbm}
\usepackage{tikz} 
\usepackage{wrapfig}
\usepackage{floatrow}
\usepackage{enumitem}
\usepackage{caption}
\usepackage{subcaption}
\usepackage{booktabs} 
\usepackage[ruled]{algorithm2e} 

\usepackage[suppress]{color-edits}
\addauthor[Hoda]{hh}{cyan}
\addauthor{kh}{red}
\addauthor{vc}{blue}
\addauthor{jk}{orange}
\addauthor{sw}{blue}

\title{Bayesian Persuasion for Algorithmic Recourse}

%

\author{Keegan Harris\\
Carnegie Mellon University\\
\texttt{keeganh@cmu.edu}\\
\And
Valerie Chen\\
Carnegie Mellon University\\
\texttt{valeriechen@cmu.edu}\\
\And
Joon Sik Kim\\
Carnegie Mellon University\\
\texttt{joonkim@cmu.edu}\\
\And
Ameet Talwalkar\\
Carnegie Mellon University\\
\texttt{talwalkar@cmu.edu}\\
\And
Hoda Heidari\\
Carnegie Mellon University\\
\texttt{hheidari@cmu.edu}\\
\And
Zhiwei Steven Wu\\
Carnegie Mellon University\\
\texttt{zstevenwu@cmu.edu}
}

\begin{document}
\newtheorem{theorem}{Theorem}[section]
\newtheorem{lemma}[theorem]{Lemma}
\newtheorem{fact}[theorem]{Fact}
\newtheorem{claim}[theorem]{Claim}
\newtheorem{remark}[theorem]{Remark}
\newtheorem{corollary}[theorem]{Corollary}
\newtheorem{prop}[theorem]{Proposition}
\newtheorem{assumption}[theorem]{Assumption}
\newtheorem{example}[theorem]{Example}
\newtheorem{definition}[theorem]{Definition}

\newcommand{\vtheta}{\boldsymbol{\theta}}
\newcommand{\vTheta}{\boldsymbol{\Theta}}
\newcommand{\vgamma}{\boldsymbol{\gamma}}
\newcommand{\vsigma}{\boldsymbol{\sigma}}
\newcommand{\vx}{\mathbf{x}}
\newcommand{\va}{\mathbf{a}}
\newcommand{\vs}{\mathbf{s}}
\newcommand{\xhdr}[1]{\textbf{#1.}}

\maketitle

\begin{abstract}
When subjected to automated decision-making, decision subjects may strategically modify their observable features in ways they believe will maximize their chances of receiving a favorable decision. In many practical situations, the underlying assessment rule is deliberately kept secret to avoid gaming and maintain competitive advantage. The resulting opacity forces the decision subjects to rely on \emph{incomplete information} when making strategic feature modifications. We capture such settings as a game of \emph{Bayesian persuasion}, in which the decision maker offers a form of recourse to the decision subject by providing them with an action recommendation (or \emph{signal}) to incentivize them to modify their features in desirable ways. We show that when using persuasion, the decision maker and decision subject are \emph{never worse off} in expectation, while the decision maker can be \emph{significantly better off.} While the decision maker's problem of finding the optimal Bayesian incentive-compatible (BIC) \emph{signaling policy} takes the form of optimization over infinitely-many variables, we show that this optimization can be cast as a linear program over finitely-many regions of the space of possible assessment rules. While this reformulation simplifies the problem dramatically,
solving the linear program requires reasoning about exponentially-many variables, even in relatively simple cases. Motivated by this observation, we provide a polynomial-time approximation scheme that recovers a near-optimal signaling policy. Finally, our numerical simulations on semi-synthetic data 
empirically demonstrate the benefits of using persuasion in the algorithmic recourse setting.
\end{abstract}


\section{Introduction}
High-stakes decision-making systems increasingly utilize data-driven algorithms to assess individuals in such domains as education~\citep{kuvcak2018machine}, employment~\citep{chalfin2016productivity,raghavan2020mitigating}, and lending~\citep{jagtiani2019roles}. Individuals subjected to these assessments (henceforth, decision subjects) may strategically modify their observable features in ways they believe maximize their chances of receiving favorable decisions \cite{homonoff2021does, citron2014scored}. The decision subject often has a set of actions/interventions available to them. Each of these actions leads to some measurable effect on their observable features, and subsequently, the decision they receive. From the decision maker's perspective, some of these actions may be more desirable than others. 
Consider credit scoring as an example.\footnote{Other examples of strategic settings which arise as a result of decision-making include college admissions, in which a college/university decides whether or not to admit a prospective student, hiring, in which a company decides whether or not to hire a job applicant, and lending, in which a banking institution decides to accept or reject someone applying for a loan. Oftentimes, the decision maker is aided by automated decision-making tools in these situations (e.g., \cite{kuvcak2018machine, sanchez2020does, jagtiani2019roles}).\looseness-1}  Credit scores predict how likely an individual applicant is to pay back a loan on time. Financial institutions regularly utilize credit scores to decide whether to offer applicants their financial products and determine the terms and conditions of their offers (e.g., by setting the interest rate or credit limit). Applicants regularly attempt to improve their scores given their (partial) knowledge of credit scoring instruments. For instance, a business applying for a loan may improve its score by paying off existing debt or cleverly manipulating its financial records to appear more profitable. While both of these interventions may improve credit score, the former is more desirable than the latter from the perspective of the financial institution offering the loan. The question we are interested in answering in this work is: \emph{how can the decision maker incentivize decision subjects to take such beneficial actions while discouraging manipulations?}\looseness-1

The strategic interactions between decision-making algorithms and decision subjects has motivated a growing literature known as \emph{strategic learning} (see e.g., \cite{hardt2016strategic, dongetal, shavit2020causal, kleinberg2020classifiers, harris2021stateful}).  While much of the prior work in strategic learning operates under the assumption of full transparency (i.e., the assessment rule is public knowledge), we consider settings where the full disclosure of the assessment rule is not viable. In many real-world situations, revealing the exact logic of the decision rule is either infeasible or irresponsible. For instance, credit scoring formulae are closely guarded trade secrets, in part to prevent the risk of default rates surging if applicants learn how to manipulate them. Moreover, the underlying decision rule is often \emph{fixed} ahead of time due to institutional structuring. In our credit scoring example, one department of the bank may be in charge of determining the threshold on the credit assessment, while another department may be in charge of offering recourse.\footnote{Similar logic applies to other examples of strategic settings including: college admissions, in which someone associated with the university may have the ability to offer advice to applicants, but does not have the ability to unilaterally change the underlying assessment rule, or hiring, where a recruiter for a company may have knowledge of the factors the company uses to make hiring decisions, but may not be able to change this criteria or reveal it to job applicants.}

In such settings, the decision maker may still have a vested interest in providing \emph{some} information about the decision rule to decision subjects in order to provide a certain level of \emph{transparency} and \emph{recourse}. In particular, the decision maker may be legally obliged, or economically motivated, to guide decision subjects to take actions that improve their underlying qualifications. To this end, instead of fully revealing the assessment rule, the decision maker can \emph{recommend actions} for decision subjects to take. Of course, such recommendations need to be chosen carefully and credibly; otherwise, self-interested decision subjects may not follow them or may utilize the recommendations to find pathways for manipulation.\looseness-1

We study a model of strategic learning in which the underlying assessment rule is not revealed to decision subjects. Our model captures several key aspects of the setting described above: First, even though the assessment rule is not revealed to the decision subjects, they often have \emph{prior knowledge} about what the rule may be. Secondly, when the decision maker provides recommendations to decision subjects on which action to take, the recommendations should be \emph{compatible with the subjects' incentives} to ensure they will follow the recommendation. Finally, our model assumes the decision maker discloses how they generate recommendations for recourse---an increasingly relevant requirement under recent regulations  (e.g., ~\citep{gdpr}).\looseness-1

Utilizing our model, we aim to design a mechanism for a decision maker to provide recourse to a decision subject who has incomplete information about the underlying assessment rule. We assume the assessment rule makes \emph{predictions} about some future outcome of the decision subject (e.g., whether they will pay back a loan in time if granted one). Before the assessment rule is trained (i.e., before the model parameters are fit), the decision maker and decision subject have some \emph{prior belief} about the realization of the assessment rule. This prior represents the ``common knowledge'' about the importance of various observable features for making accurate predictions. After training, the assessment rule is revealed to the decision maker, who then recommends an action for the decision subject to take, based on their pre-determined \emph{signaling policy}. Upon receiving this action recommendation, the decision subject updates their belief about the underlying assessment rule. They then take the action which they believe will maximize their utility (i.e., the benefit from the decision they receive, minus the cost of taking their selected action) in expectation. Finally, the decision maker uses the assessment rule to make a prediction about the decision subject.\looseness-1

The interaction described above is an instance of \emph{Bayesian persuasion}, a game-theoretic model of information revelation originally due to \citeauthor{kamenica2011bayesian}. The specific instance of Bayesian persuasion we consider in this work is summarized in Figure \ref{fig:summary}.\looseness-1

\begin{figure}
    \centering
    \noindent \fbox{\parbox{\textwidth}{
    \vspace{0.1cm}
    \textbf{Interaction protocol between the decision maker and decision subject} 
    
    \vspace{-0.2cm}
    \noindent\rule{13.9cm}{0.4pt}
    \begin{enumerate}[itemsep=0pt, parsep=1pt, leftmargin=8mm, rightmargin=5mm, topsep=1mm]
        \item The decision maker and decision subject initially have some prior/belief about the assessment rule that will be trained. 
        \item Before training, the decision maker commits to a signaling policy. After training, the assessment rule is revealed to the decision maker.
        \item The decision maker then uses their signaling policy and knowledge of the assessment rule to recommend an action for the decision subject to take.
        \item The decision subject updates their belief given the recommendation, and chooses an action that they believe maximizes their utility.
        \item The decision subject receives a prediction through the assessment rule.
    \end{enumerate}
    }}
    \caption{Summary of the setting we consider.}
    \label{fig:summary}
\end{figure}

\paragraph{Our contributions.} Our central conceptual contribution is to cast the problem of offering recourse under partial transparency as a game of Bayesian persuasion. Our key technical contributions consist of comparing optimal action recommendation policies in this new setup with two natural alternatives: (1) fully revealing the assessment rule to the decision subjects, or (2) revealing no information at all about the assessment rule. We provide new insights about the potentially significant advantages of action recommendation over these baselines, and offer efficient formulations to derive the optimal recommendations. More specifically, our analysis offers the following takeaways:\looseness-1
\begin{enumerate} [itemsep=0pt, parsep=2pt, leftmargin=8mm, topsep=0mm]
    \item Using tools from Bayesian persuasion, we show that it is possible for the decision maker to provide incentive-compatible action recommendations that encourage rational decision subjects to modify their features through beneficial interventions.
     While the decision maker and decision subjects are never worse off in expectation from using optimal incentive-compatible recommendations, we show that situations exist in which the decision maker is \emph{significantly better off} in expectation utilizing the optimal signaling policy (as opposed to the two baselines, Section~\ref{sec:ex}).\looseness=-1
    \item We derive the optimal signaling policy for the decision maker. While the decision maker's optimal signaling policy initially appears challenging to compute (as it involves optimizing over \emph{continuously-many} variables), we show that the problem can naturally be cast as a linear program defined in terms of a finite set of variables. However, solving this linear program may require reasoning about exponentially-many variables. Motivated by this observation, we provide a polynomial-time algorithm to approximate the optimal signaling policy (Section~\ref{sec:opt}).\looseness-1
    \item We empirically evaluate our persuasion mechanism on semi-synthetic data based on the Home Equity Line of Credit (HELOC) dataset, and find that the optimal signaling policy performs significantly better than the two natural alternatives across a wide range of instances (Section~\ref{sec:experiments}).
\end{enumerate}

\subsection{Related work}\label{sec:related}
\noindent \xhdr{Strategic responses to unknown predictive models}
To the best of our knowledge, our work is the first to use tools from persuasion to model the strategic interaction between a decision maker and strategic decision subjects when the underlying predictive model is not public knowledge. Several prior works have addressed similar problems through different models and techniques. For example, \citet{akyol2016price} quantify the ``price of transparency'', a quantity which compares the decision maker's utility when the predictive model is fully known with their utility when the model is not revealed to the decision subjects. 
\citet{DBLP:conf/nips/TsirtsisR20} study the effects of counterfactual explanations on strategic behavior.
\citet{ghalme2021strategic} compare the prediction error of a classifier when it is public knowledge with the error when decision subjects must learn a version of it, and label this difference the ``price of opacity''. They show that small errors in decision subjects' estimates of the true underlying model may lead to large errors in the performance of the model. The authors argue that their work provides formal incentives for decision makers to adopt full transparency as a policy. Our work, in contrast, is based on the observation that even if decision makers are willing to reveal their models, legal requirements, privacy concerns, and intellectual property restrictions may prohibit full transparency. So we instead study the consequences of partial transparency---a common condition in real-world domains.\looseness-1

\citet{bechavod2021information} study the effects of information discrepancy across different sub-populations of decision subjects on their ability to improve their observable features in strategic learning settings. Like us, they do not assume the predictive model is fully known to the decision subjects. Instead, the authors model decision subjects as trying to infer the underlying predictive model by learning from their social circle of family and friends, which naturally causes different groups to form within the population. In contrast to this line of work, we study a setting in which the decision maker provides customized feedback to each decision subject individually. 
Additionally, while the models proposed by \citep{ghalme2021strategic,bechavod2021information} circumvent the assumption of full information about the deployed model, they restrict the decision subjects' knowledge to be obtained only through past data. 


\noindent \xhdr{Algorithmic recourse}
Our work is closely related to recent work on algorithmic recourse \cite{karimi2021survey}. Algorithmic recourse is concerned with providing explanations and recommendations to individuals who have received unfavorable automated decisions. A line of algorithmic recourse methods including \cite{wachter2017counterfactual, ustun2019actionable, joshi2019towards} focus on suggesting \emph{actionable} or realistic changes to underlying qualifications to decision subjects interested in improving their decisions. Our action recommendations are ``actionable'' in the sense that they are interventions which promote long-term desirable behaviors while ensuring that the decision subject is not worse off in expectation. 


\noindent \xhdr{Transparency} Recent legal and regulatory frameworks, such as the General Data Protection Regulation (GDPR) \cite{gdpr}, motivate the development of forms of algorithmic transparency suitable for real-world deployment. While this work can be thought of as providing additional transparency into the decision-making process, it does not naturally fall into the existing organizations of explanation methods (e.g., as outlined in \cite{chen2021towards}), as our policy does not simply recommend actions based on the decision rule. Rather, our goal is to incentivize actionable interventions on the decision subjects' observable features which are desirable to the decision maker, and we leverage persuasion techniques to ensure compliance.\looseness-1
%

\noindent \xhdr{Bayesian persuasion}
%
There has been growing interest in Bayesian persuasion \cite{kamenica2011bayesian} in the computer science and machine learning communities in recent years. \citet{dughmi2017algorithmic2, dughmi2019algorithmic} characterize the computational complexity of computing the optimal signaling policy for several popular models of persuasion. \citet{castiglioni2020online} study the problem of learning the receiver's utilities through repeated interactions. Work in the multi-arm bandit literature \cite{mansour2015bayesian, MansourSSW16, immorlica2019bayesian, chen2018incentivizing, sellke2021price} leverages techniques from Bayesian persuasion to incentivize agents to perform bandit exploration. Finally, linear programming-based approaches to Bayesian persuasion have been studied in the economics literature \cite{kolotilin2018optimal, dworczak2019simple}, although the persuasion setting we study differs considerably.

\noindent \xhdr{Other strategic learning settings}
The strategic learning literature \cite{hardt2016strategic, ghalme2021strategic, chen2021strategic, levanon2021strategic, jagadeesan2021alternative, bechavod2021information, harris2021stateful, harris2021strategic, kleinberg2020classifiers, frankel2019improving} broadly studies machine learning questions in the presence of strategic decision subjects. There has been a long line of work in strategic learning that focuses on how strategic decision subjects adapt their input to a machine learning algorithm in order to receive a more desirable prediction, although most prior work in this literature assumes that the underlying assessment rule is fully revealed to the decision subjects, which is typically not true in reality.
\section{Setting and background}\label{sec:background}

Consider a setting in which a decision maker assigns a predicted label $\hat{y} \in \{-1, +1\}$ (e.g., whether or not someone will repay a loan if granted one) to a decision subject with \emph{initial} observable features $\vx_0 = (x_{0,1}, \cdots, x_{0,d-1}, 1)\in \mathbb{R}^d$ (e.g., amount of current debt, bank account balance, etc.).\footnote{We append a $1$ to the decision subject's feature vector for notational convenience.} We assume the decision maker uses a fixed linear decision rule to make predictions, i.e., $\hat{y} = \text{sign}\{\vx_0^\top \vtheta\}$, where the assessment rule $\vtheta \in \vTheta \subseteq \mathbb{R}^d$. 
The goal of the decision subject is to receive a positive classification (e.g., get approved for a loan). Given this goal, the decision subject may choose to take some \emph{action} $a$ from some set of possible actions $\mathcal{A}$ to modify their observable features (for example, they may decide to pay off a certain amount of existing debt, or redistribute their debt to game the credit score). We assume that the decision subject has $m$ actions $\{a_1, a_2, \ldots a_m\} \in \mathcal{A}$ at their disposal in order to improve their outcomes. For convenience, we add $a_{\emptyset}$ to $\mathcal{A}$ to denote taking "no action".  By taking action $a$, the decision subject incurs some \emph{cost} $c(a) \in \mathbb{R}$. This could be an actual monetary cost, but it can also represent non-monetary notions of cost such as opportunity cost or the time/effort the decision subject may have to exert to take the action. We assume taking an action $a$ changes a decision subject's observable feature values from $\vx_0$ to $\vx_0 + \Delta \vx(a)$, where $\Delta \vx(a) \in \mathbb{R}^d$, and $\Delta \vx_j(a)$ specifies the change in the $j$th observable feature as the result of taking action $a$.\footnote{Since we focus on a single decision subject, we hide the dependence of $\Delta \vx$ on the initial feature value $\vx_0$ to keep the notation simple.} For the special case of $a_{\emptyset}$, we have $\Delta \vx(a_{\emptyset}) = \mathbf{0}$, $c(a_{\emptyset}) = 0$. 
As a result of taking action $a$, a decision subject, \emph{ds}, receives utility $u_{\text{\text{ds}}}(a, \vtheta) = \text{sign}\{(\vx_0 + \Delta \vx(a))^\top \vtheta\} - c(a)$. In other words, the decision subject receives some positive (negative) utility for a positive (negative) classification, subject to a \emph{cost} for taking the action.

If the decision subject had exact knowledge of the assessment rule $\vtheta$ used by the decision maker, they could solve an optimization problem to determine the best action to take in order to maximize their utility. However, in many settings it is not realistic for a decision subject to have perfect knowledge of $\vtheta$. Instead, we model the decision subject's information through a \emph{prior} $\Pi$ over $\vtheta$, which can be thought of as ``common knowledge'' about the relative importance of various observable features in predicting the outcome of interest. For example, the decision subject may believe that prior payment history would likely be highly correlated with future default. We will use $\pi(\cdot)$ to denote the probability density function of $\Pi$ (so that $\pi(\vtheta)$ denotes the probability of the deployed assessment rule being $\vtheta$). We assume the decision subject is rational and risk-neutral. So at any point during the interaction, if they hold a belief $\Pi'$ about the underlying assessment rule, they would pick an action $a^*$ that maximize their \emph{expected} utility with respect to that belief. More precisely, they solve
$a^* \in \arg \max_{a \in \mathcal{A}} \mathbb{E}_{\vtheta \sim \Pi'}[u_{\text{\text{ds}}}(a, \vtheta)].$\looseness-1

From the decision maker's perspective, some actions may be more desirable than others. For example, a bank may prefer that an applicant pay off more existing debt than less when applying for a loan. To formalize this notion of action preference, we say that the decision maker receives some utility $u_{dm}(a) \in \mathbb{R}$ when the decision subject takes action $a$. In the loan example, $u_{dm}(\text{pay off more debt}) > u_{dm}(\text{pay off less debt})$. 

\subsection{Bayesian persuasion in the algorithmic recourse setting}\label{sec:bp}
The decision maker has an \emph{information advantage} over the decision subject, due to the fact that they know the true assessment rule $\vtheta$, whereas the decision subject does not. The decision maker may be able to leverage this information advantage to incentivize the decision subject to take a more favorable action (compared to the one they would have taken according to their prior) by recommending an action to the decision subject according to a commonly known \emph{signaling policy}. 

\begin{definition}[Signaling policy]
 A signaling policy $\mathcal{S}: \vTheta \rightarrow \mathcal{A}$ is a (possibly stochastic) mapping from assessment rules to actions.\footnote{Note that since our model is focused on the decision maker's interactions with a single decision subject, we drop the dependence of $\sigma$ on the decision subject's characteristics.} 
\end{definition}

We use $\sigma \sim \mathcal{S}(\vtheta)$ to denote the action recommendation sampled from signaling policy $\mathcal{S}$, where $\sigma \in \mathcal{A}$ is the realized recommended action. 

The decision maker's signaling policy is assumed to be fixed and common knowledge. This is because in order for the decision subject to perform a Bayesian update based on the observed recommendation, they need to know the signaling policy. Additionally, the decision maker must have the \emph{power of commitment}, i.e., the decision subject must believe that the decision maker
will select actions according to their signaling policy. In our setting, this will be the case since the decision maker commits to their signaling policy before training the assessment rule. This can be seen as a form of transparency, as the decision maker is publicly announcing how they will use their assessment rule to provide action recommendations/recourse before they train the assessment rule. For simplicity, we assume that the decision maker shares the same prior beliefs $\Pi$ as the decision subject over the observable features before the model is trained. These assumptions are standard in the Bayesian persuasion literature (see, e.g., \cite{kamenica2011bayesian, mansour2015bayesian, MansourSSW16}).

In order for the decision subject to be incentivized to follow the actions recommended by the decision maker, the signaling policy $\mathcal{S}$ needs to be \emph{Bayesian incentive-compatible}.

\begin{definition}[Bayesian incentive-compatibility]\label{def:BIC}
Consider a decision subject \emph{ds} with initial observable features $\vx_0$ and prior $\Pi$.
A signaling policy $\mathcal{S}$ is Bayesian incentive-compatible (BIC) for \emph{ds} if $    \mathbb{E}_{\vtheta \sim \Pi}[u_{ds}(a, \vtheta)|\sigma = a] \geq \mathbb{E}_{\vtheta \sim \Pi}[u_{ds}(a', \vtheta)|\sigma = a]$
 for all actions $a, a' \in \mathcal{A}$ such that $\mathcal{S}(\vtheta)$ has positive support on $\sigma = a$.
\end{definition}

In other words, a signaling policy $\mathcal{S}$ is BIC if, \emph{given that the decision maker recommends action} $a$, the decision subject's expected utility is at least as high as the expected utility of taking any other action $a'$.\looseness-1 

We remark that while for the ease of exposition our model focuses the interactions between the decision maker and a \emph{single} decision subject, our results can be extended to a heterogeneous population of decision subjects---as long as we assume their interactions with the decision-maker are independent of one another (e.g., this assumption rules out one subject updating their belief based on the outcome of another subject's prior interaction with the decision-maker). Under such a setting, the decision maker would publicly commit to a method of computing the signaling policy, given a decision subject's initial observable features as input. Once a decision subject arrives, their feature values are observed and the signaling policy is computed.
\section{Characterizing the utility gains of persuasion} \label{sec:ex}
 
 As is generally the case in the persuasion literature \cite{kamenica2011bayesian, kamenica2019bayesian, dughmi2019algorithmic}, the decision maker can achieve higher expected utility with an optimized signaling policy compared to if they provided no recommendation or fully disclosed the model. To characterize \emph{how much} leveraging the decision maker's information advantage may improve their expected utility under our setting, we study the following example.\looseness-1

Consider a simple setting under which a single decision subject has one observable feature $x_0$ (e.g., credit score) and two possible actions: $a_{\emptyset} =$ ``do nothing'' (i.e., $\Delta x(a_{\emptyset}) = 0$, $c(a_{\emptyset}) = 0$, $u_{dm}(a_{\emptyset}) = 0$) and $a_1 =$ ``pay off existing debt'' (i.e., $\Delta x(a_{1}) > 0$, $c(a_{1}) > 0$, $u_{dm}(a_{1}) = 1$), which in turn raises their credit score. For the sake of our illustration, we assume credit-worthiness to be a mutually desirable trait, and credit scores to be a good measure of credit-worthiness. We assume the decision maker would like to design a signaling policy to maximize the chance of the decision subject taking action $a_1$, regardless of whether or not the applicant will receive the loan. In this simple setting, the decision maker's decision rule can be characterized by a single threshold parameter $\theta$, i.e., the decision subject receives a positive classification if $x + \theta \geq 0$ and a negative classification otherwise. Note that while the decision subject does not know the exact value of $\theta$, they instead have some prior over it, denoted by $\Pi$.\looseness-1

Given the true value of $\theta$, the decision maker recommends an action $\sigma \in \{a_\emptyset, a_1\}$ for the decision subject to take. The decision subject then takes a possibly different action $a \in \{a_\emptyset, a_1\}$, which changes their observable feature from $x_{0}$ to $x = x_{0} + \Delta x(a)$. Recall that the decision subject's utility takes the form $u_{ds}(a, \theta) = \text{sign}\{(x_{0} + \Delta x(a)) + \theta\} - c(a)$. Note that if $c(a_1) > 2$, then $u_{ds}(a_\emptyset, \theta) > u_{ds}(a_1, \theta)$ holds for any value of $\theta$, meaning that it is impossible to incentivize any rational decision subject to play action $a_1$. Therefore, in order to enable the decision maker to incentivize action $a_1$, we assume $c(a_1) < 2$. 

We observe that in this simple setting, we can bin values of $\theta$ into three different ``regions'', based on the outcome the decision subject would receive if $\theta$ were actually in that region. First, if $x_0 + \Delta x(a_1) + \theta < 0$, the decision subject will not receive a positive classification, even if they take action $a_1$. In this region, the decision subject's initial feature value $x_0$ is ``too low'' for taking the desired action to make a difference in their classification. We refer to this region as $L$. Second, if $x_0 + \theta \geq 0$, the decision subject will receive a positive classification \emph{no matter what} action they take. In this region, $x_0$ is ``too high'' for the action they take to make any difference on their classification. We refer to this region as $H$. Third, if $x_0 + \theta < 0$ and $x_0 + \Delta x(a_1) + \theta \geq 0$, the decision subject will receive a positive classification if they take action $a_1$ and a negative classification if they take action $a_\emptyset$. We refer to this region as $M$. Consider the signaling policy in Figure \ref{fig:sp}.

\begin{figure}
    \centering
    \noindent \fbox{\parbox{\textwidth}{
    \vspace{0.1cm}
    \textbf{Example signaling policy $\mathcal{S}(\theta)$}
    
    \vspace{-0.2cm}
    \noindent\rule{13.9cm}{0.4pt}
    \begin{itemize}[leftmargin=0.65in]
        \item[\textbf{Case 1:}] $\theta \in L$. Recommend action $a_1$ with probability $q$ and action $a_\emptyset$ with probability $1 - q$
        \item[\textbf{Case 2:}] $\theta \in M$. Recommend action $a_1$ with probability $1$
        \item[\textbf{Case 3:}] $\theta \in H$. Recommend action $a_1$ with probability $q$ and action $a_\emptyset$ with probability $1 - q$   
    \end{itemize}
    }}
    \caption{Signaling policy for the example of Section \ref{sec:ex}.}
    \label{fig:sp}
\end{figure}

In Case 2, $\mathcal{S}$ recommends the action ($a_1$) that the decision subject would have taken had they known the true $\theta$, with probability $1$. However, in Case 1 and Case 3, the decision maker recommends, with probability $q$, an action ($a_1$) that the decision subject would not have taken knowing $\theta$, leveraging the fact that the decision subject does not know exactly which case they are currently in. If the decision subject follows the decision maker's recommendation from $\mathcal{S}$, then the decision maker expected utility will increase from $0$ to $q$ if the realized $\theta \in L$ or $\theta \in H$, and will remain the same otherwise. Intuitively, if $q$ is ``small enough'' (where the precise definition of ``small'' depends on the prior over $\theta$ and the cost of taking action $a_1$), then it will be in the decision subject's best interest to follow the decision maker's recommendation, \textit{even though they know that the decision maker may sometimes recommend taking action $a_1$ when it is not in their best interest to take that action}. That is, the decision maker may recommend that a decision subject pay off existing debt with probability $q$ when it is unnecessary for them to do so in order to secure a loan. We now give a criteria on $q$ which ensures the signaling policy $\mathcal{S}$ is BIC.\looseness-1

\begin{prop}\label{thm:BIC-ex}
Signaling policy $\mathcal{S}$ is Bayesian incentive-compatible if $q = \min \{\frac{\pi(M)(2 - c(a_1))}{c(a_1)(1 - \pi(M))}, 1\}$, where $\pi(M) := \mathbb{P}_{\theta \sim \Pi}(x_{0} + \theta < 0 \text{ and } x_{0} + \Delta x(a_1) + \theta \geq 0)$.\looseness-1
\end{prop}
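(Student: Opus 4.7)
The plan is to verify the two BIC inequalities (one per possible recommendation) directly from Definition~\ref{def:BIC}, using the fact that conditional on the recommendation, the posterior over $\theta$ is supported on at most the three regions $L,M,H$ with probabilities determined by Bayes' rule.

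First I would handle the easy case, namely the recommendation $\sigma=a_\emptyset$. Under $\mathcal{S}$, this recommendation is only issued when $\theta\in L\cup H$, so the posterior puts zero mass on $M$. On $L$ the classification is negative regardless of the chosen action and on $H$ it is positive regardless, so in both regions $u_{ds}(a_\emptyset,\theta)-u_{ds}(a_1,\theta)=c(a_1)>0$. Hence the required inequality $\mathbb{E}[u_{ds}(a_\emptyset,\theta)\mid\sigma=a_\emptyset]\ge\mathbb{E}[u_{ds}(a_1,\theta)\mid\sigma=a_\emptyset]$ holds unconditionally on $q$.

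Next I would tackle the non-trivial case, the recommendation $\sigma=a_1$, which is issued with probability $q$ on $L\cup H$ and with probability $1$ on $M$. Applying Bayes' rule, the posterior is
\[
\Pr[\theta\in L\mid\sigma=a_1]=\tfrac{q\,\pi(L)}{q(1-\pi(M))+\pi(M)},\ \ \Pr[\theta\in M\mid\sigma=a_1]=\tfrac{\pi(M)}{q(1-\pi(M))+\pi(M)},\ \ \Pr[\theta\in H\mid\sigma=a_1]=\tfrac{q\,\pi(H)}{q(1-\pi(M))+\pi(M)}.
\]
On each region the signs of $x_0+\theta$ and $x_0+\Delta x(a_1)+\theta$ are determined by the definition of $L,M,H$, so $u_{ds}(a_1,\theta)-u_{ds}(a_\emptyset,\theta)$ is the constant $-c(a_1)$ on $L$, $2-c(a_1)$ on $M$, and $-c(a_1)$ on $H$. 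Multiplying the BIC inequality through by the (positive) denominator reduces it to
\[
\pi(M)\,(2-c(a_1))\ \ge\ q\,c(a_1)\,(\pi(L)+\pi(H))\ =\ q\,c(a_1)(1-\pi(M)),
\]
which rearranges to $q\le\frac{\pi(M)(2-c(a_1))}{c(a_1)(1-\pi(M))}$. Combined with the trivial constraint $q\le 1$ that keeps $\mathcal{S}$ a valid probability, this gives exactly the choice of $q$ in the statement.

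The only real obstacle is a modeling subtlety rather than a technical one: checking that the region-wise characterization of $u_{ds}(a,\theta)$ is used consistently (in particular, that $\Delta x(a_1)>0$ implies $L$ is also characterized by $x_0+\theta<0$, so $a_\emptyset$ yields a negative classification on $L$), and handling the degenerate case $\pi(M)=0$ separately, in which the upper bound on $q$ is $0$ and the policy collapses to always recommending $a_\emptyset$ outside $M$, trivially BIC. Apart from those bookkeeping checks, the proof is a one-line Bayes' rule computation followed by the algebraic rearrangement above.
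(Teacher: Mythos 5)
Your proposal is correct and follows essentially the same route as the paper's proof: verify the BIC inequality for $\sigma=a_\emptyset$ by noting the posterior places zero mass on $M$ (so deviating to $a_1$ only incurs the cost), then compute the Bayes posterior given $\sigma=a_1$ over $L$, $M$, $H$ and rearrange the resulting inequality to obtain $q\le\frac{\pi(M)(2-c(a_1))}{c(a_1)(1-\pi(M))}$, capped at $1$. The region-wise utility differences and the final algebra match the paper's computation exactly.
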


\textit{Proof Sketch.} We show that $\mathbb{E}_{\theta \sim \Pi}[u_{ds}(a_\emptyset, \theta)|\sigma = a_\emptyset] \geq \mathbb{E}_{\theta \sim \Pi}[u_{ds}(a_1, \theta)|\sigma = a_\emptyset]$ and $\mathbb{E}_{\theta \sim \Pi}[u_{ds}(a_1, \theta)|\sigma = a_1] \geq \mathbb{E}_{\theta \sim \Pi}[u_{ds}(a_\emptyset, \theta)|\sigma = a_1]$. Since these conditions are satisfied, $\mathcal{S}$ is BIC. The full proof may be found in Appendix \ref{sec:bic-ex-proof}.

Next, we show that the decision maker's expected utility when recommending actions according to the optimal signaling policy can be \emph{arbitrarily higher} than their expected utility from revealing full information or no information. We prove the following result in Appendix \ref{sec:unbounded-proof}.

\begin{prop}\label{prop:unbounded}
For any $\epsilon > 0$, there exists a problem instance such that the expected decision maker utility from recommending actions according to the optimal signaling policy is $1 - \epsilon$ and the expected decision maker utility for revealing full information or revealing no information is at most $\epsilon$.
\end{prop}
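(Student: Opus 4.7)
The plan is to exhibit an explicit instance of the two-action example from Section \ref{sec:ex} and verify the three utility claims directly using the formulas derived there. The key structural observation is that the three quantities behave very differently as functions of the cost $c(a_1)$ relative to the threshold $2\pi(M)$: the full-information utility equals $\pi(M)$ (since a fully-informed decision subject takes $a_1$ only in region $M$, and only there when $c(a_1)<2$); the no-information utility is a $0/1$ quantity, which is $0$ precisely when $c(a_1)>2\pi(M)$ (by the sign of $\mathbb{E}_{\vtheta\sim\Pi}[u_{ds}(a_1,\theta)-u_{ds}(a_\emptyset,\theta)]=2\pi(M)-c(a_1)$); and the persuasion utility along the policy $\mathcal{S}$ of Figure \ref{fig:sp}, namely $\pi(M)+q(1-\pi(M))$ with $q$ from Proposition \ref{thm:BIC-ex}, simplifies to $2\pi(M)/c(a_1)$ in the regime $c(a_1)>2\pi(M)$ and thus approaches $1$ as $c(a_1)\downarrow 2\pi(M)$.

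Given $\epsilon\in(0,1)$, I would fix $\pi(M)=\epsilon$ (assigning the remaining mass $1-\epsilon$ arbitrarily to $L$ and $H$ so both are positive) and set the cost just above the no-information threshold: $c(a_1)=2\pi(M)/(1-\epsilon/2)=2\epsilon/(1-\epsilon/2)$, which for small $\epsilon$ lies in the admissible interval $(0,2)$ required for persuasion to be nontrivial. With this choice, full information yields expected utility exactly $\pi(M)=\epsilon$; the no-information expected difference $2\pi(M)-c(a_1)=-\epsilon\cdot\frac{\epsilon/2}{1-\epsilon/2}$ is strictly negative so the decision subject plays $a_\emptyset$ and the decision maker receives $0$; and applying Proposition \ref{thm:BIC-ex} gives $q=\pi(M)(2-c(a_1))/[c(a_1)(1-\pi(M))]$, so the persuasion utility along $\mathcal{S}$ reduces (after algebra) to $2\pi(M)/c(a_1)=1-\epsilon/2$. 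Since the optimal BIC policy is at least as good as $\mathcal{S}$, the optimal signaling utility is at least $1-\epsilon/2\ge 1-\epsilon$, and by perturbing $c(a_1)$ slightly within the admissible range one can attain any target value in this window, including $1-\epsilon$ exactly.

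The only delicate point is calibrating $c(a_1)$ at the correct scale: it must sit strictly above $2\pi(M)$ so that the uninformed decision subject strictly prefers $a_\emptyset$, yet close enough to $2\pi(M)$ that the BIC-optimal recommendation probability $q$ in regions $L$ and $H$ is close to $1$. This is the conceptual crux of the argument: persuasion exploits precisely the gap between the prior-average incentive and the posterior incentive conditional on the recommended action, and this gap is large exactly when the uninformed incentive narrowly misses the threshold. The remaining steps are routine substitutions into the formulas from Proposition \ref{thm:BIC-ex} and the utility expressions derived in Section \ref{sec:ex}, so the proof should be short once the instance is chosen.
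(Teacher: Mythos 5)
Your proposal is correct and takes essentially the same route as the paper's proof: both instantiate the one-dimensional example of Section~\ref{sec:ex}, use the same three closed-form utilities (full information $=\pi(M)$, no information $=\mathbbm{1}\{\pi(M)\ge c(a_1)/2\}$, persuasion $=\min\{2\pi(M)/c(a_1),1\}$), and calibrate $c(a_1)$ to sit just above $2\pi(M)$ while keeping $\pi(M)\le\epsilon$ --- the paper's choice $c(a_1)=2\epsilon$, $2\pi(M)=c(a_1)(1-\epsilon)$ is your instance up to reparametrization. (Your sign computation in the no-information case has a harmless factor-of-two slip, and like the paper you implicitly identify the utility of $\mathcal{S}$ with that of the optimal policy, but neither affects the conclusion.)
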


\section{Computing the optimal signaling policy}\label{sec:opt}


In Section~\ref{sec:ex}, we show a one-dimensional example, where a signaling policy can obtain arbitrarily better utilities compared to revealing full information and revealing no information. We now derive the decision maker's optimal signaling policy for the general setting with arbitrary numbers of observable features and actions described in Section \ref{sec:background}. Under the general setting, the decision maker's optimal signaling policy can be described by the following optimization:
\small
\begin{equation}\label{eqn:opt}
\begin{aligned}
    &\max_{p(\sigma = a|\vtheta), \forall a \in \mathcal{A}} \quad \mathbb{E}_{\sigma \sim \mathcal{S}(\vtheta), \vtheta \sim \Pi}[u_{dm}(\sigma)]\\
    \text{s.t.} \quad &\mathbb{E}_{\vtheta \sim \Pi}[u_{ds}(a, \vtheta) - u_{ds}(a', \vtheta) | \sigma = a] \geq 0, \; \forall a, a' \in \mathcal{A},
\end{aligned}
\end{equation}
\normalsize
where we omit the valid probability constraints over $p(\sigma = a|\vtheta), a \in \mathcal{A}$ for brevity. In words, the decision maker wants to design a signaling policy $\mathcal{S}$ in order to maximize their expected utility, subject to the constraint that the signaling policy is BIC. At first glance, the optimization may initially seem hopeless as there are infinitely many values of $p(\sigma = a|\vtheta)$ (one for every possible $\vtheta \in \vTheta$) that the decision maker's optimal policy must optimize over. However, we will show that the decision maker's optimal policy can actually be recovered by optimizing over finitely many variables. 

By rewriting the BIC constraints as integrals over $\vTheta$ and applying Bayes' rule, our optimization over $p(\sigma = a|\vtheta), a \in \mathcal{A}$ takes the following form
\small
\begin{equation*}
\begin{aligned}
    &\max_{p(\sigma = a|\vtheta), \forall a \in \mathcal{A}} \quad \mathbb{E}_{\sigma \sim \mathcal{S}(\vtheta), \vtheta \sim \Pi}[u_{dm}(\sigma)]\\
    \text{s.t.} \quad &\int_{\vTheta} p(\sigma = a|\vtheta)\pi(\vtheta)(u_{ds}(a, \vtheta) - u_{ds}(a', \vtheta)) d\vtheta \geq 0, \; \forall a, a' \in \mathcal{A}.
\end{aligned}
\end{equation*}
\normalsize

Note that if $u_{ds}(a, \vtheta) - u_{ds}(a', \vtheta)$ is the same for some ``equivalence region'' $R \subseteq \vTheta$ (which we formally define below), we can pull $u_{ds}(a, \vtheta) - u_{ds}(a', \vtheta)$ out of the integral and instead sum over the different equivalence regions. Intuitively, an equivalence region can be thought of as the set of all $\vtheta \in \vTheta$ pairs that are indistinguishable from a decision subject's perspective because they lead to the exact same utility for any possible action the decision subject could take.
Based on this idea, we formally define a region of $\vTheta$ as follows. 

\begin{definition}[Equivalence Region]\label{def:region}
Two assessments $\vtheta, \vtheta'$ are \emph{equivalent} (w.r.t. $u_{ds}$) if $u_{ds}(a, \vtheta) - u_{ds}(a', \vtheta) = u_{ds}(a, \vtheta') - u_{ds}(a', \vtheta')$, $\forall a, a' \in \mathcal{A}$. An \emph{equivalence region} $R$ is a subset of $\vTheta$ such that for any $\theta \in R$, all $\theta'$ equivalent to $\theta$ are also in $R$. We denote the set of all equivalence regions by $\mathcal{R}$.\looseness-1
\end{definition}

For more intuition about the definition of an equivalence region, see Figure \ref{fig:region} in Appendix \ref{sec:region-illustration}. After pulling the decision subject utility function out of the integral, we can integrate $p(\sigma = a|\vtheta) \pi(\vtheta)$ over each equivalence region $R$. We denote $p(R)$ as the probability that the true $\vtheta \in R$ according to the prior. Finally, since it is possible to write the constraints in terms of $p(\sigma = a|R)$, $\forall a \in \mathcal{A}, R \in \mathcal{R}$, it suffices to optimize directly over these quantities. For completeness, we include the constraints which make each $\{p(\sigma = a|R)\}_{a \in \mathcal{A}}$, $\forall R$ a valid probability distribution.

\begin{theorem}[Optimal signaling policy]\label{thm:opt}
The decision maker's optimal signaling policy can be characterized by the following linear program OPT-LP:
\small
\begin{equation}\label{opt:exact}
\tag*{(OPT-LP)}
\begin{aligned}
    \max_{p(\sigma = a|R), \forall a \in \mathcal{A}, R \in \mathcal{R}} \quad &\sum_{a \in \mathcal{A}} \sum_{R \in \mathcal{R}} p(R) p(\sigma = a|R) u_{dm}(a)\\
    \text{s.t.} \quad &\sum_{R \in \mathcal{R}} p(\sigma = a|R)p(R)(u_{ds}(a, R) - u_{ds}(a', R)) \geq 0, \; \forall a, a' \in \mathcal{A}\\
    &\sum_{a \in \mathcal{A}} p(\sigma = a|R) = 1, \; \forall R, \; \; \;
    p(\sigma = a|R) \geq 0, \; \forall R \in \mathcal{R}, a \in \mathcal{A},\\
\end{aligned}
\end{equation}
\normalsize
\end{theorem}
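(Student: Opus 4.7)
The plan is to derive \ref{opt:exact} from the infinite-dimensional optimization in Equation~\eqref{eqn:opt} by two successive simplifications: first rewriting the conditional BIC constraints as unconditional integrals via Bayes' rule, and then exploiting the finite equivalence-region structure from Definition~\ref{def:region} to collapse the continuum of decision variables to finitely many.

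First, I would apply Bayes' rule to each BIC constraint. Since
\[
\mathbb{E}_{\vtheta\sim\Pi}[u_{ds}(a,\vtheta) - u_{ds}(a',\vtheta) \mid \sigma = a] \;=\; \frac{\int_{\vTheta} p(\sigma = a \mid \vtheta)\,\pi(\vtheta)\,(u_{ds}(a,\vtheta) - u_{ds}(a',\vtheta))\,d\vtheta}{\Pr(\sigma = a)},
\]
and the denominator is non-negative, the BIC constraint is equivalent to requiring the numerator to be non-negative (and is vacuous whenever $\Pr(\sigma = a) = 0$). The same manipulation turns the objective into $\sum_{a \in \mathcal{A}} u_{dm}(a)\int_{\vTheta} p(\sigma = a \mid \vtheta)\,\pi(\vtheta)\,d\vtheta$. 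Note that because $u_{ds}(a,\vtheta) = \mathrm{sign}\{(\vx_0 + \Delta\vx(a))^\top \vtheta\} - c(a)$ depends on $\vtheta$ only through the sign pattern of the finitely many affine forms $(\vx_0+\Delta\vx(a))^\top \vtheta$ over $a \in \mathcal{A}$, two $\vtheta$'s are equivalent iff they share this sign pattern; hence each region $R \in \mathcal{R}$ is a (measurable) intersection of halfspaces, there are at most $2^{|\mathcal{A}|}$ regions, and $u_{ds}(a,\vtheta)$ is itself constant on each $R$, which justifies the notation $u_{ds}(a,R)$ used in \ref{opt:exact}.

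Second, I would partition $\vTheta = \bigsqcup_{R \in \mathcal{R}} R$ and split each integral region-by-region. Pulling the constant utility differences out of the inner integrals gives
\[
\int_{\vTheta} p(\sigma = a\mid\vtheta)\,\pi(\vtheta)\,(u_{ds}(a,\vtheta) - u_{ds}(a',\vtheta))\,d\vtheta \;=\; \sum_{R \in \mathcal{R}} (u_{ds}(a,R) - u_{ds}(a',R)) \int_R p(\sigma = a\mid\vtheta)\,\pi(\vtheta)\,d\vtheta.
\]
Defining $p(R) := \int_R \pi(\vtheta)\,d\vtheta$ and, for $p(R) > 0$, $p(\sigma = a \mid R) := \tfrac{1}{p(R)}\int_R p(\sigma = a \mid \vtheta)\,\pi(\vtheta)\,d\vtheta$, each inner integral equals $p(\sigma = a\mid R)\cdot p(R)$. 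Substituting into the BIC constraint and into the objective reproduces exactly the constraint and objective of \ref{opt:exact}, where the simplex constraints $\sum_a p(\sigma = a\mid R) = 1$ and $p(\sigma = a \mid R) \geq 0$ are inherited automatically from $p(\sigma = \cdot\mid \vtheta)$ being a distribution on $\mathcal{A}$.

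Finally, I would close the loop by arguing that the reduction is lossless in both directions. Given any feasible policy for Equation~\eqref{eqn:opt}, the aggregated quantities $\{p(\sigma = a\mid R)\}$ above form a feasible \ref{opt:exact} solution achieving the same objective. Conversely, any feasible \ref{opt:exact} solution $\{p(\sigma = a\mid R)\}$ induces a piecewise-constant policy $p(\sigma = a \mid \vtheta) := p(\sigma = a \mid R(\vtheta))$ that is feasible for Equation~\eqref{eqn:opt} with matching objective; so the two optima coincide and solving \ref{opt:exact} recovers the decision maker's optimal signaling policy. The step requiring the most care is the handling of the infinite-to-finite reduction at the measure-theoretic level: I need to argue that within an equivalence region it is without loss of optimality to restrict to policies that are constant on $R$, and to treat regions with $p(R) = 0$ (for which the BIC constraints are vacuous and any choice on the simplex is feasible) so that no contribution to either the objective or the constraints is lost.
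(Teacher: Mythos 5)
Your proposal follows essentially the same route as the paper: rewrite the conditional BIC constraints as unconditional integrals via Bayes' rule, partition $\vTheta$ into equivalence regions on which the utility differences are constant, pull those constants out of the integrals, and aggregate $p(\sigma=a|\vtheta)\pi(\vtheta)$ over each region to obtain the finitely many variables $p(\sigma=a|R)$; the explicit two-directional losslessness argument and the treatment of regions with $p(R)=0$ are details the paper leaves implicit, and they are correct. One minor imprecision: by Definition~\ref{def:region} two assessments are equivalent when the \emph{differences} $u_{ds}(a,\vtheta)-u_{ds}(a',\vtheta)$ agree, which holds not only when the sign patterns of $(\vx_0+\Delta\vx(a))^\top\vtheta$ coincide but also when they are uniformly flipped across all actions (cf.\ region $R_0$ in Figure~\ref{fig:region}), so $u_{ds}(a,\cdot)$ need not itself be constant on a region---only its differences are, which is all the LP uses; your sign-pattern partition is a harmless refinement of $\mathcal{R}$ and the argument goes through unchanged.
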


where $p(\sigma = a|R)$ denotes the probability of sending recommendation $\sigma = a$ if $\vtheta \in R$.
Note that the linear program OPT-LP is always feasible, as the decision maker can always recommend the action the decision subject would play according to the prior, which is BIC. Similarly, always recommending the action the decision subject would take had they known the assessment rule is also feasible.

While the problem of determining the decision maker's optimal signaling policy can be transformed from an optimization over infinitely many variables into an optimization over the set of finitely many equivalence regions $\mathcal{R}$, $|\mathcal{R}|$ may be exponential in the number of observable features $d$ (see Appendix \ref{sec:regions} for more details). This is perhaps unsurprising as without any assumptions on $\vTheta$, the representation of the prior $\Pi$ can scale exponentially with $d$. In this case, we expect the running time of an algorithm which takes the entire prior as input to be exponential in the number of features as well. This motivates the need for a computationally efficient algorithm to approximate \ref{opt:exact}, which does not require the full prior $\Pi$ as input.\looseness-1

We adapt the sampling-based approximation algorithm of \citet{dughmi2019algorithmic} to our setting in order to compute an $\epsilon$-optimal and $\epsilon$-approximate signaling policy in polynomial time, as shown in Algorithm \ref{alg:approx} in Appendix \ref{sec:approx_appendix}. At a high level, Algorithm \ref{alg:approx} samples polynomially-many times from the prior distribution over the space of assessment rules, and solves an empirical analogue of \ref{opt:exact}. In Appendix \ref{sec:approx_appendix}, we show that the resulting signaling policy is $\epsilon$-BIC, and is $\epsilon$-optimal with high probability, for any $\epsilon > 0$. Formally, we prove the following statement.

\begin{theorem}\label{thm:approx}
Algorithm \ref{alg:approx} runs in poly($m, \frac{1}{\epsilon}$) time (where $m = |\mathcal{A}|$), and implements an $\epsilon$-BIC signaling policy that is $\epsilon$-optimal with probability at least $1 - \delta$.
\end{theorem}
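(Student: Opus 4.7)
The plan is to analyze Algorithm~\ref{alg:approx} as solving a sampled analogue of \ref{opt:exact} in which the prior $\Pi$ is replaced by the empirical distribution over $N$ i.i.d.\ samples $\vtheta_1,\dots,\vtheta_N \sim \Pi$. Concretely, the empirical LP has decision variables $\{p(\sigma=a\mid\vtheta_i)\}_{i\in[N],\,a\in\mathcal{A}}$, maximizes $\tfrac{1}{N}\sum_i\sum_a p(\sigma=a\mid\vtheta_i)\,u_{dm}(a)$, and for each pair $(a,a')$ enforces the empirical BIC constraint
\begin{equation*}
\tfrac{1}{N}\sum_{i=1}^N p(\sigma=a\mid\vtheta_i)\bigl(u_{ds}(a,\vtheta_i)-u_{ds}(a',\vtheta_i)\bigr)\;\geq\;-\tfrac{\epsilon}{2},
\end{equation*}
together with the simplex constraints $\sum_a p(\sigma=a\mid\vtheta_i)=1$, $p(\sigma=a\mid\vtheta_i)\geq 0$. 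The running time is then immediate: the LP has $Nm$ variables and $O(Nm+m^2)$ constraints, so for $N=\mathrm{poly}(m,1/\epsilon,\log(1/\delta))$ any polynomial-time LP solver finishes in $\mathrm{poly}(m,1/\epsilon)$ time, which establishes the complexity claim.

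For the statistical guarantees, the key step is a uniform concentration argument. Since $u_{ds}(a,\vtheta)\in[-1-c_{\max},\,1]$ and $u_{dm}(a)$ are bounded, each of the $m^2$ BIC expressions and the objective is a bounded-range empirical mean; by Hoeffding's inequality applied to any fixed distribution over $\mathcal{A}$-valued recommendations, and a union bound over the $m^2+1$ quantities, setting $N=\Theta\!\bigl((1/\epsilon^2)\log(m/\delta)\bigr)$ guarantees that, with probability at least $1-\delta$, every empirical BIC expression and the empirical objective is within $\epsilon/2$ of its true expectation for \emph{the specific policies of interest} (the true optimum and the algorithm's output). Conditioning on this event, two directions follow. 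First, the true optimal signaling policy $\mathcal{S}^*$ (whose true BIC slack is $\geq 0$) is feasible for the empirical LP (whose tolerance is $-\epsilon/2$), so the empirical optimum is at least $\mathbb{E}[u_{dm}(\mathcal{S}^*)]-\epsilon/2$. Second, the algorithm's output has empirical BIC slack $\geq -\epsilon/2$, so its true BIC violations are at most $\epsilon$, giving $\epsilon$-BIC. A further application of concentration to the algorithm's output's utility closes the optimality gap to $\epsilon$.

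To apply the concentration bound to the algorithm's output itself (which depends on the samples) rather than a fixed policy, one extends the sampled solution to a full signaling policy on $\vTheta$ by exploiting equivalence regions (Definition~\ref{def:region}): every $\vtheta$ lying in an equivalence region that contains some sampled $\vtheta_i$ inherits the distribution $p(\sigma=\cdot\mid\vtheta_i)$, while any $\vtheta$ in a region with no sample is assigned a default action (e.g., the decision subject's prior-optimal action, which trivially satisfies BIC). Because the utility coefficients $u_{dm}(a)$ and the differences $u_{ds}(a,\vtheta)-u_{ds}(a',\vtheta)$ are constant within each equivalence region, this extension preserves the empirical objective and constraint values exactly, and the mass placed on ``unsampled'' regions is small in expectation—by a standard balls-and-bins / missing-mass argument, its total prior weight is $O(\sqrt{|\mathcal{R}|/N})$, which is absorbed into the $\epsilon$ budget by increasing $N$ by a polynomial factor when needed.

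The main obstacle is precisely this last step: the output policy depends on the same samples used to form the empirical LP, so one cannot directly apply Hoeffding to a ``fixed'' policy. The cleanest resolution, following \citet{dughmi2019algorithmic}, is to invoke a uniform convergence argument over the polytope of empirical LP solutions (whose extreme points are finitely many and whose VC-like complexity is controlled by $Nm$), or equivalently to use a ghost-sample / symmetrization argument that bounds $\sup_{\mathcal{S}}|\widehat{\mathbb{E}}[\,\cdot\,]-\mathbb{E}[\,\cdot\,]|$ simultaneously over all feasible empirical solutions. Once this uniform deviation bound is in place, both the $\epsilon$-BIC and $\epsilon$-optimal conclusions follow by combining it with the two-direction argument sketched above, completing the proof.
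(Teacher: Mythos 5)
Your high-level plan (sample, solve an empirical LP with relaxed BIC constraints, use concentration plus a union bound for a fixed witness policy) matches the skeleton of the paper's argument, and you correctly identify the central difficulty: the output policy depends on the very samples used to build the empirical LP. However, your proposed resolution of that difficulty is the part that fails. A uniform convergence bound over ``the polytope of empirical LP solutions'' does not go through: the number of extreme points of \ref{opt:approx} grows exponentially with the sample size, so a union bound over them is vacuous, and the class of all signaling policies $\vtheta \mapsto p(\sigma=\cdot\mid\vtheta)$ is rich enough to fit any sample exactly, so $\sup_{\mathcal{S}}|\widehat{\mathbb{E}}[\cdot]-\mathbb{E}[\cdot]|$ does not concentrate via symmetrization either. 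The paper avoids this entirely by the \emph{principle of deferred decisions}: Algorithm \ref{alg:approx} inserts the realized $\vtheta$ at a uniformly random index among the $K$ samples, so the true assessment rule is distributed as a uniform draw from $\widetilde{\vTheta}$. Consequently the algorithm's realized behavior is governed \emph{exactly} by the empirical quantities --- the $\epsilon$-BIC property (Lemma \ref{lem:eps-BIC}) and the identity between the expected utility of $\widehat{\mathcal{S}}$ and the expected empirical objective (Claim \ref{clm:exp}) follow from the law of iterated expectation, with no concentration applied to the data-dependent output at all. Without this device, your argument that the output's ``true BIC violations are at most $\epsilon$'' has no valid justification.

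A second missing ingredient is the ``large or honest'' structural lemma (Lemma \ref{lem:lh}). You propose to certify feasibility of the true optimum $\mathcal{S}^*$ in the empirical LP directly, but the BIC constraints are conditional on $\sigma=a$: the relaxation in \ref{opt:approx} is effectively additive at scale $\epsilon\cdot p(\sigma=a)$, so for signals sent with tiny probability an $O(\epsilon)$ additive concentration error on the unnormalized constraint can swamp the relaxation and $\mathcal{S}^*$ need not be empirically feasible. The paper therefore first modifies $\mathcal{S}^*$ into a fixed, sample-independent policy $\widetilde{\mathcal{S}}$ in which every signal is either sent with probability at least $\frac{\epsilon}{2m}$ or is pointwise optimal for the decision subject (hence trivially feasible), losing only $\frac{\epsilon}{2}$ in objective value; McDiarmid plus a union bound over the $m^2+1$ quantities is then applied to this fixed $\widetilde{\mathcal{S}}$ only. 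Your missing-mass argument for unsampled equivalence regions is also unnecessary under the paper's construction and would additionally be problematic as stated, since $|\mathcal{R}|$ can be exponential in $d$ (Proposition \ref{prop:exp}), so an $O(\sqrt{|\mathcal{R}|/N})$ term cannot be absorbed with polynomially many samples.
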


We leave open the question of whether there are classes of succinctly represented priors that permit efficient algorithms for computing the exact optimal policy in time polynomial in $d$ and $m$. 
It is also plausible to design efficient algorithms that only require some form of query access to the prior distribution. However, information-theoretic lower bounds of \cite{dughmi2019algorithmic} rule out query access through sampling.\looseness-1

\section{Experiments}\label{sec:experiments}

In this section, we provide experimental results using a semi-synthetic setting where decision subjects are based on individuals in the Home Equity Line of Credit (HELOC) dataset \cite{FICO}. The HELOC dataset contains information about 9,282 customers who received a Home Equity Line of Credit. Each individual in the dataset has 23 observable features related to an applicant's financial history (e.g., percentage of previous payments that were delinquent) and a label which characterizes their loan repayment status (repaid/defaulted). We compare the decision maker utility for different models of information revelation: our optimal signaling policy, revealing full information about the model, revealing no information about the model. As our theory predicts, the expected decision maker utility when recommending actions according to the optimal signaling policy either matches or exceeds the expected utility from revealing full information or no information about the assessment rule across all problem instances. Moreover, the expected decision maker utility from signaling is \emph{significantly} higher on average. Next, we explore how the decision maker's expected utility changes when action costs and changes in observable features are varied jointly. Our results are summarized in Figures \ref{fig:util-var-comparison} and \ref{fig:3d-plot}.\looseness-1

In order to adapt the HELOC dataset to our strategic setting, we select four features and define five hypothetical actions $\mathcal{A} = \{a_\emptyset, a_1, a_2, a_3, a_4\}$ that decision subjects may take in order to improve their observable features. Actions $\{a_1, a_2, a_3, a_4\}$ result in changes to each of the decision subject's four observable features, whereas action $a_\emptyset$ does not. For simplicity, we view actions $\{a_1, a_2, a_3, a_4\}$ as equally desirable to the decision maker, and assume they are all more desirable than $a_\emptyset$. Using these four features, we train a logistic regression model that predicts whether an individual is likely to pay back a loan if given one, which will serve as the decision maker's realized assessment rule. For more information on how we constructed our experiments, see Appendix~\ref{sec:appdx-decision-maker-model}.

\begin{figure}
    \centering
    \floatbox[{\capbeside\thisfloatsetup{capbesideposition={right, center},capbesidewidth=0.4\textwidth}}]{figure}[\FBwidth]
    {\caption{Total decision maker utility 
    averaged across all cost and $\Delta \vx(a)$ configurations 
    for three different prior variances.
    The optimal signaling policy (red) consistently yields higher utility compared to the two baselines: revealing full information (blue) and no information (green). This gap increases when the decision subject is less certain about the model being used (higher $\sigma^2$).\looseness-1
    }}
    {\includegraphics[width=0.5\textwidth]{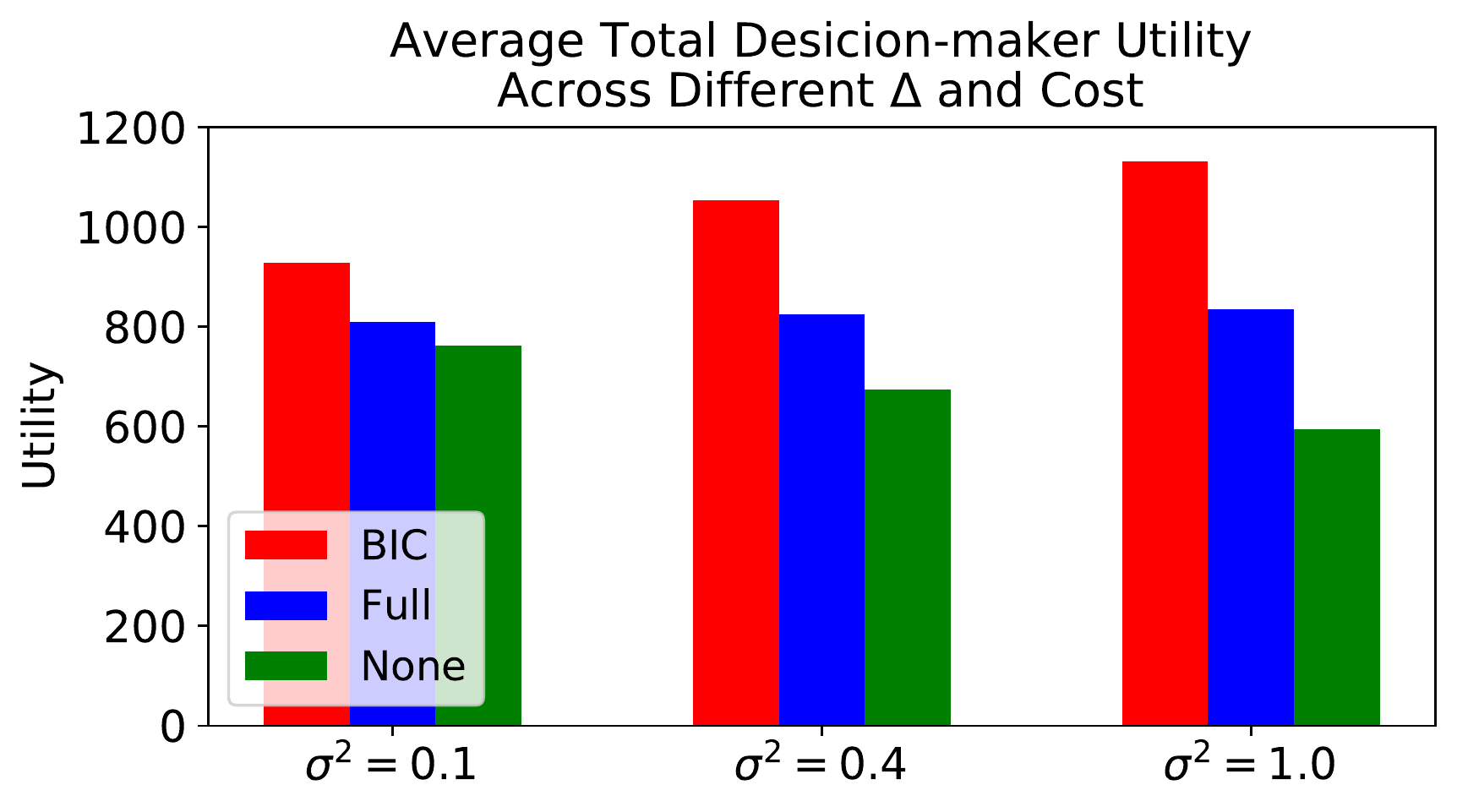}\label{fig:util-var-comparison}}
\end{figure}

Given a $\{(c(a_i), \Delta \vx(a_i))\}_{i=1}^4$ instance and information revelation scheme, we calculate the decision maker's total expected utility by summing their expected utility for each applicant. Figure~\ref{fig:util-var-comparison} shows the average total expected decision maker utility across different $\Delta \vx(a)$ and cost configurations for priors with varying amounts of uncertainty. See Figure \ref{fig:additional-details-configs} in Appendix \ref{sec:appdx-additional-results} for plots of all instances which were used to generate Figure \ref{fig:util-var-comparison}. Across all instances, the optimal signaling policy (red) achieves higher average total utility compared to the other information revelation schemes (blue and green). The difference is further amplified whenever the decision subjects are less certain about the true assessment rule (i.e., when $\sigma$ is large). Intuitively, this is because the decision maker leverages the decision subjects' uncertainty about the true assessment rule in order to incentivize them to take desirable actions, and as the uncertainty increases, so does their ability of persuasion.

\begin{figure}
    \centering
    \includegraphics[width=\textwidth]{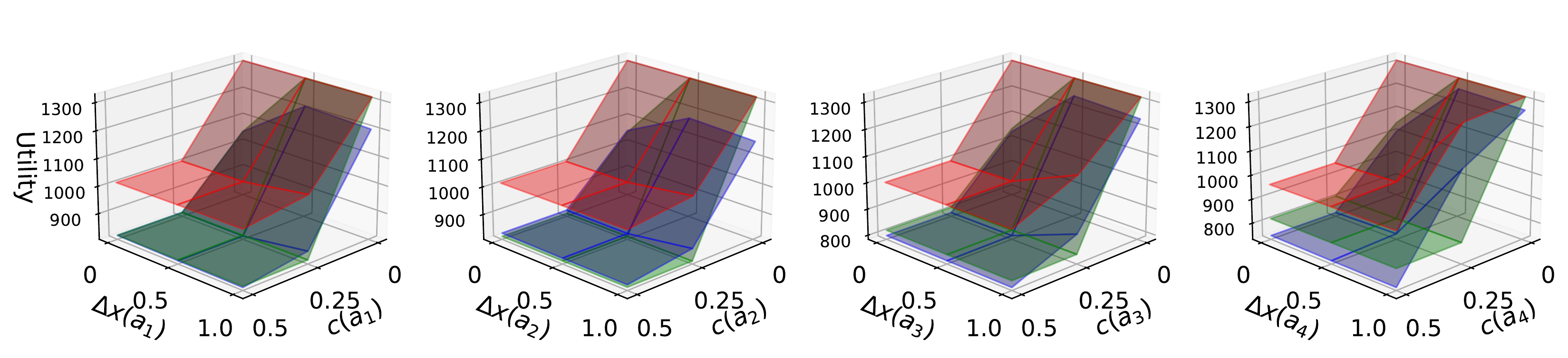}
    \caption{Expected utility across different $c(a)$ and $\Delta \vx(a)$ configurations for $\sigma^2=0.4$. Optimal signaling policy (red) effectively upper-bounds the two baselines, revealing everything (blue) and revealing nothing (green) in all settings.}
    \label{fig:3d-plot}
\end{figure}

To better understand how the decision maker's expected utility changes as a function of $c(a)$ and $\Delta \vx(a)$, we sweep through multiple $\{(c(a_i), \Delta \vx(a_i))\}_{i=1}^4$ tuples on a grid of $( c(a_i), \Delta \vx(a_i) ) \in \{0, 0.25, 0.5\} \times \{0, 0.5, 1.0\}$ for $i \in \{1,2,3,4\}$ and measure the effectiveness of the three information revelation schemes. Figure~\ref{fig:3d-plot} shows the surface of the decision maker utility as a function of $( c(a_i), \Delta \vx(a_i) )$ for the optimal signaling policy (red), revealing full information (blue), and revealing no information (green). When $c(a_i)$ is high and $\Delta \vx (a_i)$ is low, the total expected decision maker utility is low as there is less incentive for the decision subject to take actions (although even under this setting, the optimal signaling policy outperforms the other two baselines). As $c(a_i)$ decreases and $\Delta \vx (a_i)$ increases, the total expected decision maker utility increases.
\section{Conclusion}\label{sec:conc}
We investigate the problem of offering algorithmic recourse without requiring full transparency (i.e., revealing the assessment rule). We cast this problem as a game of Bayesian persuasion, and offer several new insights regarding how a decision maker can leverage their information advantage over decision subjects to incentivize mutually beneficial actions. Our stylized model relies on several simplifying assumptions, which suggest important directions for future work: 

\noindent \xhdr{Public persuasion} We assume that the recommendations received by each decision subject are \emph{private}. However, if a decision subject is given access to recommendations for multiple individuals, it may be possible for them to reconstruct the underlying model. While out of the scope of this work, it would be interesting to study models of \emph{public} persuasion in the algorithmic recourse setting.\looseness-1

\noindent \xhdr{Beyond linear decision rules} We focus on settings with \emph{linear} decision rules and assume all decision subject parameters (e.g., cost function, initial observable features, etc.) are known to the decision maker. We leave it for future work to extend our findings to non-linear decision rules, or settings in which some of the decision subjects' parameters are unknown to the decision maker. 

\section{Acknowledgements}

KH is supported by a NDSEG Fellowship. ZSW and KH were supported in part by the NSF FAI Award \#1939606, a Google Faculty Research Award, a J.P. Morgan Faculty Award, a Facebook Research Award, and a Mozilla Research Grant. AT was supported in part by the National Science Foundation grants IIS1705121, IIS1838017, IIS2046613, IIS2112471, a funding from Meta, Morgan Stanley and Amazon. HH acknowledges support from NSF IIS2040929, a CyLab 2021 grant, and a Meta (Facebook) research award. Any opinions, findings and conclusions or recommendations expressed in this material are those of the author(s) and do not necessarily reflect the views of any of these funding agencies. KH would like to thank Haifeng Xu for insightful conversations about \citet{dughmi2019algorithmic}. KH would also like to thank James Best, Yatong Chen, Jeremy Cohen, Daniel Ngo, Chara Podimata, and Logan Stapleton for helpful suggestions and conversations.

\newpage 
\bibliographystyle{abbrvnat}
\bibliography{refs}
\newpage
\section*{Checklist}

\begin{enumerate}

\item For all authors...
\begin{enumerate}
  \item Do the main claims made in the abstract and introduction accurately reflect the paper's contributions and scope?
    \answerYes{}
  \item Did you describe the limitations of your work?
    \answerYes{} See Section \ref{sec:conc}.
  \item Did you discuss any potential negative societal impacts of your work?
    \answerYes{} See Appendix \ref{sec:si}. 
  \item Have you read the ethics review guidelines and ensured that your paper conforms to them?
    \answerYes{}
\end{enumerate}

\item If you are including theoretical results...
\begin{enumerate}
  \item Did you state the full set of assumptions of all theoretical results?
    \answerYes{} See Sections \ref{sec:ex} and \ref{sec:opt}.
        \item Did you include complete proofs of all theoretical results?
    \answerYes{} See Appendices \ref{sec:bic-ex-proof}, \ref{sec:unbounded-proof}, \ref{sec:regions}, \ref{sec:approx_appendix}.
\end{enumerate}

\item If you ran experiments...
\begin{enumerate}
  \item Did you include the code, data, and instructions needed to reproduce the main experimental results (either in the supplemental material or as a URL)?
    \answerYes{} See supplementary material.
  \item Did you specify all the training details (e.g., data splits, hyperparameters, how they were chosen)?
    \answerYes{} See Appendix \ref{sec:1d-experiment} and \ref{sec:appdx-decision-maker-model}.
        \item Did you report error bars (e.g., with respect to the random seed after running experiments multiple times)?
    \answerNA{}
        \item Did you include the total amount of compute and the type of resources used (e.g., type of GPUs, internal cluster, or cloud provider)?
    \answerYes{} See Appendix \ref{sec:compute}
\end{enumerate}

\item If you are using existing assets (e.g., code, data, models) or curating/releasing new assets...
\begin{enumerate}
  \item If your work uses existing assets, did you cite the creators?
    \answerNA{}
  \item Did you mention the license of the assets?
    \answerNA{}
  \item Did you include any new assets either in the supplemental material or as a URL?
    \answerNA{}
  \item Did you discuss whether and how consent was obtained from people whose data you're using/curating?
    \answerNA{}
  \item Did you discuss whether the data you are using/curating contains personally identifiable information or offensive content?
    \answerNA{}
\end{enumerate}

\item If you used crowdsourcing or conducted research with human subjects...
\begin{enumerate}
  \item Did you include the full text of instructions given to participants and screenshots, if applicable?
    \answerNA{}
  \item Did you describe any potential participant risks, with links to Institutional Review Board (IRB) approvals, if applicable?
    \answerNA{}
  \item Did you include the estimated hourly wage paid to participants and the total amount spent on participant compensation?
    \answerNA{}
\end{enumerate}

\end{enumerate}

\appendix
\section{Societal implications}\label{sec:si}
Persuasion in the algorithmic recourse setting is usually socially beneficial, as the quantities the decision maker wants to incentivize are usually aligned with socially desirable actions. Our work emphasizes these potential social benefits through our running example with credit lending which illustrates how persuasion can be used to incentivize individuals to pay off more existing debt. However, we recognize that there may be decision makers whose interests are negatively correlated with the interests of society who may use persuasion as a tool to incentivize socially undesirable behavior. Our work illustrates the ways in which persuasion may play a role in the deployment of algorithmic decision-making, capturing important dynamics between the decision maker and decision subject in the study of the long-term impact of these systems. 
\section{Computational Resources}\label{sec:compute}
We ran our experiments on an Intel Core i7-8700 3.2 GHz 6-Core Processor with 16GB of RAM and a GTX1070 GPU, although the GPU was not used.
\section{Proof of Proposition \ref{thm:BIC-ex}}\label{sec:bic-ex-proof}
\begin{proof}

Based on the decision subject's prior over $\theta$, they can calculate 
\begin{itemize}
    \item[(1)] $\pi(L) = \mathbb{P}_{\Pi}(x_{0} + \Delta x(a_1) + \theta < 0)$, i.e., the probability the decision subject is in region $L$ according to the prior
    \item[(2)] $\pi(M) = \mathbb{P}_{\Pi}(x_{0} + \theta < 0 \text{ and } x_{0} + \Delta x(a_1) + \theta \geq 0)$, i.e., the probability the decision subject is in region $M$ according to the prior 
    \item[(3)] $\pi(H) = \mathbb{P}_{\Pi}(x_{0} + \theta \geq 0)$, i.e., the probability the decision subject is in region $H$ according to the prior 
\end{itemize}

\textbf{Case 1}: $\sigma = a_0$. Given the signal $\sigma = a_\emptyset$, the decision subject's \emph{posterior} probability density function $\pi(\cdot|\sigma = a_\emptyset)$ over $L$, $M$, and $H$ will take the form 

\begin{itemize}
    \item[] $\pi(L|\sigma = a_\emptyset) = \frac{p(\sigma = a_\emptyset|L)\pi(L)}{p(\sigma = a_\emptyset)} = \frac{\pi(L)}{\pi(L) + \pi(H)}$ 
    \item[] $\pi(M|\sigma = a_\emptyset) = \frac{p(\sigma = a_\emptyset|M)\pi(M)}{p(\sigma = a_\emptyset)} = 0$
    \item[] $\pi(H|\sigma = a_\emptyset) = \frac{p(\sigma = a_\emptyset|H)\pi(H)}{p(\sigma = a_\emptyset)} = \frac{\pi(H)}{\pi(L) + \pi(H)}$
\end{itemize}

If the decision subject receives signal $\sigma = a_0$, they know with probability $1$ that they are \emph{not} in region $M$ with probability $1$. Therefore, they know that taking action $a_1$ will not change their classification, so they will follow the decision maker's recommendation and take action $a_\emptyset$.

\textbf{Case 2}: $\sigma = a_1$. Given the signal $\sigma = a_1$, the decision subject's posterior density over $L$, $M$, and $H$ will take the form 
\begin{itemize}
    \item[] $\pi(L|\sigma = a_1) = \frac{p(\sigma = a_1|L)\pi(L)}{p(\sigma = a_1)} = \frac{q \pi(L)}{\pi(M) + q (\pi(L) + \pi(H))} = \frac{q \pi(L)}{\pi(M) + q (1 - \pi(M))}$ 
    \item[] $\pi(M|\sigma = a_1) = \frac{p(\sigma = a_1|M)\pi(M)}{p(\sigma = a_1)} = \frac{\pi(M)}{\pi(M) + q (\pi(L) + \pi(H))} = \frac{\pi(M)}{\pi(M) + q (1 - \pi(M))}$
    \item[] $\pi(H|\sigma = a_1) = \frac{p(\sigma = a_1|H)\pi(H)}{p(\sigma = a_1)} = \frac{q \pi(H)}{\pi(M) + q (\pi(L) + \pi(H))} = \frac{q \pi(H)}{\pi(M) + q (1 - \pi(M))}$
\end{itemize}

\noindent The decision subject's expected utility of taking actions $a_\emptyset$ and $a_1$ under the posterior induced by $\sigma = a_1$ are
\begin{equation*}
\begin{aligned}
    \mathbb{E}_{\vtheta \sim \Pi}[u_{ds}(a_\emptyset, \theta)|\sigma = a_1] &= \pi(H|\sigma = a_1) \cdot (1 - 0) + \pi(M|\sigma = a_1) \cdot (-1 - 0) + \pi(L|\sigma = a_1) \cdot (-1 - 0)\\
    &= \pi(H|\sigma = a_1) - \pi(M|\sigma = a_1) - \pi(L|\sigma = a_1)
\end{aligned}
\end{equation*}
and
\begin{equation*}
    \begin{aligned}
        \mathbb{E}_{\vtheta \sim \Pi}[u_{ds}(a_1, \theta)|\sigma = a_1] &= \pi(H|\sigma = a_1) \cdot (1 - c(a_1))\\ &+ \pi(M|\sigma = a_1) \cdot (1 - c(a_1)) + \pi(L|\sigma = a_1) \cdot (-1 - c(a_1))\\
    \end{aligned}
\end{equation*}

\noindent In order for $\mathcal{S}$ to be BIC,
\begin{equation*}
    \mathbb{E}_{\vtheta \sim \Pi}[u_{ds}(a_1, \theta)|\sigma = a_1] \geq \mathbb{E}_{\vtheta \sim \Pi}[u_{ds}( a_\emptyset, \theta)|\sigma = a_1].
\end{equation*}
Plugging in our expressions for $\mathbb{E}_{\vtheta \sim \Pi}[u_{ds}(a_1, \theta)|\sigma = a_1]$ and $\mathbb{E}_{\vtheta \sim \Pi}[u_{ds}( a_\emptyset, \theta)|\sigma = a_1]$, we see that

\begin{equation*}
\begin{aligned}
    \pi(H|\sigma = a_1) \cdot (1 - c(a_1)) + &\pi(M|\sigma = a_1) \cdot (1 - c(a_1)) + \pi(L|\sigma = a_1) \cdot (-1 - c(a_1)) \\
    &\geq \pi(H|\sigma = a_1) - \pi(M|\sigma = a_1) - \pi(L|\sigma = a_1)
\end{aligned}
\end{equation*}

\noindent After canceling terms and simplifying, we see that
\begin{equation*}
    -(\pi(L|\sigma = a_1) + \pi(H|\sigma = a_1)) c(a_1) + \pi(M|\sigma = a_1) (2 - c(a_1)) \geq 0
\end{equation*}

\noindent Next, we plug in for $\pi(L|\sigma = a_1)$, $\pi(M|\sigma = a_1)$, and $\pi(H|\sigma = a_1)$. Note that the denominators of $\pi(L|\sigma = a_1)$, $\pi(M|\sigma = a_1)$, and $\pi(H|\sigma = a_1)$ cancel out.

\begin{equation*}
    -q(\pi(L) + \pi(H))c(a_1) + \pi(M) (2 - c(a_1)) = -q(1 - \pi(M))c(a_1) + \pi(M) (2 - c(a_1)) \geq 0
\end{equation*}

\noindent Solving for $q$, we see that
\begin{equation*}
    q \leq \frac{\pi(M)(2 - c(a_1))}{c(a_1)(1 - \pi(M))}.
\end{equation*}

\noindent Note that $q \geq 0$ always. Finally, in order for $q$ to be a valid probability, we restrict $q$ such that
\begin{equation*}
    q = \min\{ \frac{\pi(M)(2 - c(a_1))}{c(a_1)(1 - \pi(M))}, 1\}.
\end{equation*}
\noindent This completes the proof.
\end{proof}

\section{Proof of Proposition \ref{prop:unbounded}}\label{sec:unbounded-proof}
\begin{proof}
Consider the example in Section \ref{sec:ex}. 

\noindent \xhdr{Expected utility from revealing no information} If the decision subject acts exclusively according to the prior, they will select action $a_1$ with probability $1$ if $\mathbb{E}_{\vtheta \sim \Pi}[u_{ds}(a_1, \theta)] \geq \mathbb{E}_{\vtheta \sim \Pi}[u_{ds}(a_\emptyset, \theta)]$ and with probability $0$ otherwise. Plugging in our expressions for $\mathbb{E}_{\vtheta \sim \Pi}[u_{ds}(a_1, \theta)]$ and $\mathbb{E}_{\vtheta \sim \Pi}[u_{ds}(a_\emptyset, \theta)]$, we see that the decision subject will select action $a_1$ only if 
\begin{equation*}
    \pi(L)(-1 - c(a_1)) + \pi(M)(1 - c(a_1)) + \pi(H)(1 - c(a_1)) \geq \pi(L) (-1 - 0) + \pi(M)(-1 - 0) + \pi(H)(1 - 0)
\end{equation*}

\noindent Canceling terms and simplifying, we see that 
\begin{equation*}
    -c(a_1)(\pi(L) + \pi(H)) + \pi(M)(2 - c(a_1)) \geq 0
\end{equation*}
must hold for the decision subject to select action $a_1$. Finally, substituting $\pi(L) + \pi(H) = 1 - \pi(M)$ gives us the condition $2\pi(M) -c(a_1) \geq 0$. Alternatively, if $\frac{\pi(M)}{c(a_1)} < \frac{1}{2}$,
the decision subject will select action $a_\emptyset$ with probability $1$. Intuitively, this means that a rational decision subject would take action $a_1$ if the ratio of $\pi(M)$ (the probability according to the prior that taking action $a_1$ is in the decision subject's best interest) to $c(a_1)$ (the cost of taking action $a_1$) is high, and would take action $a_\emptyset$ otherwise.

\noindent \xhdr{Expected utility from revealing full information} If the decision maker reveals the assessment rule to the decision subject, they will select action $a_1$ when $\theta \in M$ and action $a_\emptyset$ otherwise. Therefore since $u_{dm}(a_1) = 1$ and $u_{dm}(a_\emptyset) = 0$, the decision maker's expected utility if they reveal full information is $\pi(M)$.

\noindent \xhdr{Expected utility from $\mathcal{S}$} Recall that the decision maker's signaling policy $\mathcal{S}$ from Section \ref{sec:ex} sets $q = \min\{\frac{\pi(M)(2 - c(a_1))}{c(a_1)(1 - \pi(M))}, 1\}$. Under this setting, the decision maker's expected utility is $\min\{1\cdot\pi(M) + q\cdot (1 - \pi(M)), 1\}$. Substituting in our expression for $q$ and simplifying, we see that the decision maker's expected utility for recommending actions via $\mathcal{S}$ is $\min \{\frac{2\pi(M)}{c(a_1)}, 1\}$.

Suppose that $2 \pi(M) = c(a_1) (1 - \epsilon)$ and $c(a_1) = 2\epsilon$, for some small $\epsilon > 0$. 
The decision maker's expected utility will always be $0$ from revealing no information because $\frac{2\pi(M)}{c(a_1)} = 1 - \epsilon < 1$. The decision maker's expected utility from recommending actions via $\mathcal{S}$ will be $\frac{2\pi(M)}{c(a_1)} = 1 - \epsilon$ . Since $\pi(M) = \epsilon(1 - \epsilon) < \epsilon$, the decision maker's expected utility from revealing full information will be less than $\epsilon$. Therefore, as $\epsilon$ approaches $0$, the decision maker's expected utility from revealing full information approaches $0$ (the smallest value possible), and the decision maker's expected utility from $\mathcal{S}$ approaches $1$ (the highest value possible). This completes the proof.
\end{proof}

The decision maker's expected utility as a function of their possible strategies is summarized in Table \ref{table:payoff}. Note that when $\mathbbm{1}\{\pi(M) \geq \frac{c(a_1)}{2}\} = 1$, $q = 1$. Therefore, the decision maker's expected utility is always as least as good as the two natural alternatives of revealing no information about the assessment rule, or revealing full information about the rule.

\begin{table}
    \centering
    \begin{tabular}{c c c c}
         & No information & Signaling with $\mathcal{S}$ & Full information \\ \hline
         Decision maker utility & $\mathbbm{1}\{\pi(M) \geq \frac{c(a_1)}{2}\}$ & $\pi(M) + q(1 - \pi(M))$ & $\pi(M)$ \\ \hline
    \end{tabular}
    \caption{Decision maker's expected utility when (1) revealing no information about the model, (2) recommending actions according to $\mathcal{S}$, and (3) revealing full information about the model.}\label{table:payoff}
\end{table}
\section{Region Illustration}\label{sec:region-illustration}
In Figure \ref{fig:region}, we show an example of how different equivalence regions might partition the space of possible assessment rules $\vTheta$. In this example, there are two actions and two observable features, and the space of $\vTheta$ is partitioned into three different equivalence regions. Note that as long as the set of actions $\mathcal{A}$ is finite, $|\mathcal{R}| < \infty$. 

\begin{figure}[t]
    \centering
      \begin{tikzpicture}[scale=1.5,extended line/.style={shorten >=-#1,shorten <=-#1},every text node part/.style={align=center}]
\draw[draw=black] (0,0) rectangle ++(1,1);
\draw[draw=black] (1,0) rectangle ++(3,1);
\draw[draw=black] (0,1) rectangle ++(1,1);
\draw[draw=black] (1,1) rectangle ++(3,1);
\draw [->](0,0)--(0,2.2) node[right]{$\theta_2$};
\draw [->](0,0)--(4.2,0) node[right]{$\theta_1$};
\foreach \x/\xtext in {0, 1/\frac{1}{2}, 4/2}
{\draw (\x cm,1pt ) -- (\x cm,-1pt ) node[anchor=north] {$\xtext$};}
\foreach \y/\ytext in {1/\frac{1}{2}, 2/1}
{\draw (1pt,\y cm) -- (-1pt ,\y cm) node[anchor=east] {$\ytext$};}
\node at (0.5,0.5) {$R_0$\\ $\{\emptyset\}$};
\node at (2.5,0.5) {$R_1$\\ $\{a_1\}$};
\node at (2.5,1.5) {$R_0$\\ $\{a_1,a_2\}$};
\node at (0.5,1.5) {$R_2$\\ $\{a_2\}$};
\end{tikzpicture}
\caption{An illustration of the equivalence regions for a two action ($a_1$, $a_2$) and two observable feature ($x_1, x_2$) setting, where $\vTheta = [0, 2] \times [0, 1] \times \{\frac{1}{2}\}$. Consider an individual with $\vx_0 = [0, 0, 1]^\top$, $\Delta \vx(a_1) = [1, 0, 0]^\top$, and $\Delta \vx(a_2) = [0, 1, 0]^\top$. The equivalence regions of $\vTheta$ are quadrants described the set of actions the decision subject could take in order to receive a positive classification. Region $R_0$ contains the bottom-left and top-right quadrants of $\vTheta$, region $R_1$ contains the bottom-right quadrant of $\vTheta$, and region $R_2$ contains the top-left quadrant of $\vTheta$.}
  \label{fig:region}
\end{figure}

\section{Computational Barriers}\label{sec:regions}
In this section, we show that even in the setting where each action only affects one observable feature (e.g., as shown in Figure \ref{fig:actionscheme}), the number of equivalence regions in \ref{opt:exact} is still exponential in the size of the input. 
While somewhat simplistic, we believe this action scheme reasonably reflects real-world settings in which the decision subjects are under time or resource constraints when deciding which action to take. For example, the decision subject may need to choose between paying off some amount of debt and opening a new credit card when strategically modifying their observable features before applying for a loan.

\usetikzlibrary{positioning}
\tikzset{main node/.style={circle,draw,minimum size=1cm,inner sep=0pt, thick},}
\begin{figure}
    \centering
      \begin{tikzpicture}

    \begin{scope}[xshift=4cm]
    \node[main node] (1) {$a_{i,1}$};
    \node[main node] (2) [above = 0.5cm  of 1]  {$a_{d,1}$};
    \node[main node] (3) [left = 1cm  of 1] {$a_{\emptyset}$};
    \node[main node] (4) [below = 0.5cm  of 1] {$a_{1,1}$};
    \node[main node] (5) [right = 1.5cm  of 4] {...};
    \node[main node] (6) [right = 1.5cm  of 5] {$a_{1,m_1}$};
    \node[main node] (7) [right = 1.5cm  of 1] {...};
    \node[main node] (8) [right = 1.5cm  of 7] {$a_{i,m_i}$};
    \node[main node] (9) [right = 1.5cm  of 2] {...};
    \node[main node] (10) [right = 1.5cm  of 9] {$a_{d,m_d}$};
    \node at (1,1) {\rotatebox{90}{{...}}};
    \node at (1,-1) {\rotatebox{90}{{...}}};
    \path[draw,thick]
    (3) edge node {} (4)
    (3) edge node {} (1)
    (3) edge node {} (2)
    (4) edge node {} (5)
    (5) edge node {} (6)
    (1) edge node {} (7)
    (7) edge node {} (8)
    (2) edge node {} (9)
    (9) edge node {} (10)
    ;
    \end{scope}
\end{tikzpicture}
    \caption{Graphical representation of special ordering over the actions available to each decision subject. Each branch corresponds to an observable feature and each node corresponds to a possible action the decision subject may take. The root corresponds to taking no action (denoted by $a_{\emptyset}$). Nodes further away from the root on branch $i$ correspond to higher $\Delta \vx_i$, i.e., $\Delta \vx_i(a_{\emptyset}) \prec \Delta \vx_i(a_{i,1}) \prec \ldots \prec \vx_i(a_{i,m_i})$.}
  \label{fig:actionscheme}
\end{figure}
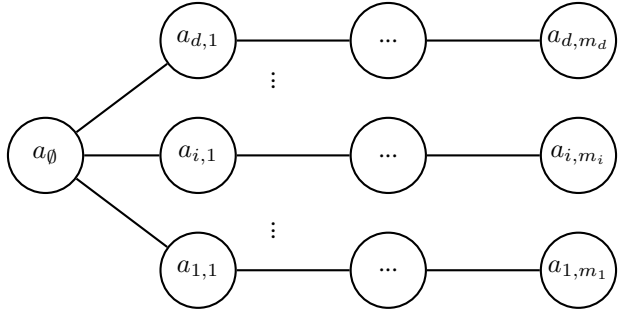

Under this setting, \ref{opt:exact} optimizes over $\Theta(m|\mathcal{R}|)$ variables, where $m$ is the number of actions available to each agent and $|\mathcal{R}|$ is the number of equivalence regions. In order to determine the size of $\mathcal{R}$, we note that an equivalence region can be alternatively characterized by observing that assessment rules $\vtheta$ and $\vtheta'$ belong to the same equivalence region if the difference in their predictions for any two actions $a$ and $a'$ is the same. (This follows from straightforward algebraic manipulation of Definition \ref{def:region}.) As such, an equivalence region $R$ can essentially be characterized by the set of actions $A_R \subseteq \mathcal{A}$ which receive a positive classification when $\vtheta \in R$.\footnote{Specifically, if taking action $a$ results in a positive classification for some $\vtheta \in \vTheta$ and a negative classification for $\vtheta \in \vTheta$, the only way for $\vtheta$ and $\vtheta'$ to be in the same equivalence region is if taking \emph{any} action in $\mathcal{A}$ results in a positive classification for $\vtheta$ and a negative classification for $\vtheta'$. Besides this special case, if $\vtheta$ and $\vtheta'$ result in different classifications for the same action, they are in different equivalence regions.}

Armed with this new characterization of an equivalence region, we are now ready to show the scale of $|\mathcal{R}|$ for the setting described in Figure \ref{fig:actionscheme}. 

\begin{prop}\label{prop:exp}
For the setting described in Figure \ref{fig:actionscheme}, there are $|\mathcal{R}| = \Pi_{i=1}^d m_i - 1$ equivalence regions, where $d$ is the number of observable features of the decision subject and $m_i$ ($\forall i \in [d]$) is the number of actions the decision subject has at their disposal to improve observable feature $i$.
\end{prop}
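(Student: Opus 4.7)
The plan is to count the equivalence regions by enumerating realizable accepted-action subsets and then applying the single identification imposed by Definition \ref{def:region}.

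First, I would recall the key reduction established just before the proposition: up to the ``all-$+$ $\sim$ all-$-$'' pairing, an equivalence region $R$ is pinned down by its set of accepted actions $A_R = \{a \in \mathcal{A} : (\vx_0 + \Delta \vx(a))^\top \vtheta \geq 0 \text{ for } \vtheta \in R\}$. So it suffices to count the realizable sets $A_R$ and then collapse the unique equivalent pair.

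Second, I would exploit the monotonicity built into the Figure \ref{fig:actionscheme} action scheme. Because actions on branch $i$ alter only feature $i$ through strictly increasing $\Delta \vx_i(a_{i,j})$, for any fixed $\vtheta$ the quantities $\vx_0^\top \vtheta + \Delta \vx_i(a_{i,j})\theta_i$ are monotone in $j$---non-decreasing if $\theta_i \geq 0$ and non-increasing if $\theta_i \leq 0$. Consequently, $A_R$ restricted to branch $i$ must be a prefix or a suffix of the chain $a_{i,1}, \ldots, a_{i,m_i}$, while the sign of $\vx_0^\top \vtheta$ alone dictates whether $a_\emptyset \in A_R$.

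Third, I would combine the branches by independence. Since the coordinates of $\vtheta$ are free parameters (assuming $\vTheta$ is sufficiently rich), the per-branch thresholds can be chosen independently, yielding a product-form count over the branches; a short case analysis on $\text{sign}(\vx_0^\top \vtheta)$ handles the status of $a_\emptyset$. An achievability check then confirms, under a mild genericity assumption on $\vx_0$ and the $\Delta \vx_i(a_{i,j})$, that every such combinatorial pattern is realized by some $\vtheta \in \vTheta$. Finally, collapsing the sole equivalent pair of patterns (``all accepted'' $\sim$ ``all rejected'') produces the stated exponential-in-$d$ count.

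The main obstacle will be careful bookkeeping: verifying achievability of every purported pattern, checking that two distinct threshold combinations never collapse to the same $A_R$, and applying the sign-flip identification only to the single pair it actually relates. The combinatorial counting---rather than any deep inequality---is what drives the result, so the proof stands or falls on getting these enumeration cases exactly right.
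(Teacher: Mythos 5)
Your overall strategy is the same as the paper's: identify each equivalence region with its accepted-action set $A_R$, use the chain structure of Figure \ref{fig:actionscheme} to restrict which $A_R$ can occur on each branch, take a product over branches, and collapse the single ``all accepted''/``all rejected'' pair. (The paper phrases the per-branch restriction via \emph{dominated} actions rather than via monotonicity of $\vx_0^\top\vtheta + \Delta\vx_i(a_{i,j})\theta_i$ in $j$, but these are the same observation.) There are, however, two places where your sketch does not actually deliver the stated count. First, your own (correct) remark that the branch-$i$ scores are non-decreasing in $j$ when $\theta_i \geq 0$ and non-increasing when $\theta_i \leq 0$ means that, as written, each branch admits both prefixes and suffixes of its chain as accepted sets --- roughly $2m_i$ patterns per branch rather than the $m_i$ (or $m_i+1$) needed for a product of the form $\prod_i m_i$. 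The formula in Proposition \ref{prop:exp} only emerges once the orientation is fixed, i.e., once one assumes that moving up a branch weakly increases the score; this is exactly what the paper's claim ``$a$ dominated by $a'$ implies $a' \in A_R$ whenever $a \in A_R$'' silently presupposes, and it fails for $\theta_i < 0$. You need to either impose that assumption explicitly or explain why only one of your two monotonicity cases contributes.

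Second, you never carry out the final enumeration, and --- as you yourself note --- that is where the proof stands or falls. Concretely: with only suffixes allowed, branch $i$ contributes $m_i + 1$ choices (accept the top $k$ actions for $k = 0, \dots, m_i$), the status of $a_\emptyset$ adds at most one further pattern (since $a_\emptyset$ accepted forces everything accepted), and the identification from Definition \ref{def:region} merges exactly one pair of patterns. Chasing these through gives $\prod_i (m_i+1)$ or $\prod_i (m_i+1) - 1$ depending on how $a_\emptyset$ is handled; this is consistent with the three regions of Figure \ref{fig:region}, but it matches the displayed formula $\prod_{i=1}^d m_i - 1$ only under the convention that $m_i$ also counts the ``do nothing on feature $i$'' level. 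A complete proof must commit to one convention, verify realizability of every purported pattern (your ``achievability check,'' which requires $\vTheta$ to be rich enough and which neither you nor the paper actually performs), and exhibit the arithmetic; ``produces the stated count'' is precisely the step that cannot be waved through here.
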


\begin{proof}
In order to characterize the number of equivalence regions $|\mathcal{R}|$, we define the notion of a \emph{dominated} action $a$, where an action $a$ is dominated by some other action $a'$ if $\Delta \vx(a) \preceq \Delta \vx (a')$, with strict inequality holding for at least one index. Using this notion of dominated actions and our refined characterization of an equivalence region, it is straightforward to see that if action $a$ is dominated by action $a'$, then  $a' \in A_R$ for any equivalence region $R$ where $a \in A_R$. Proposition \ref{prop:exp} then follows directly from the fact that each action only affects one observable feature.
\end{proof}

Proposition \ref{prop:exp} {shows that} the computation of \ref{opt:exact} quickly becomes intractable as the number of observable features grows large, even in this relatively simple setting. This motivates the need for an approximation algorithm for \ref{opt:exact}, which we present in Section \ref{sec:approx_appendix}.
\section{Proof of Theorem \ref{thm:approx}}\label{sec:approx_appendix}

\begin{algorithm}[t]
        \SetAlgoNoLine
        \KwIn{$\vtheta \in \vTheta$, $\epsilon > 0, \delta > 0$}
        \KwOut{Signaling policy $\widehat{\mathcal{S}} := \{p(\sigma = a|R_{\vtheta})\}_{\forall a \in \mathcal{A}}$ (where  region $R_{\vtheta}$ contains $\vtheta$)} 
        Set $K = \left \lceil \frac{2}{\epsilon^2} \log \left( \frac{2(m^2 + 1)}{\delta}\right) \right \rceil$ \\
        Pick $\ell \in \{1, \ldots, K\}$ uniformly at random. Set $\vtheta_{\ell} = \vtheta$.\\
        Sample $\widetilde{\vTheta} = \{\vtheta_1, \ldots, \vtheta_{\ell-1}, \vtheta_{\ell+1}, \ldots, \vtheta_K\} \sim \pi(\vtheta)$.\\
        Let $\widetilde{\mathcal{R}}$ denote the set of observed regions. Compute $\tilde{p}(R)$, $\forall R \in \widetilde{\mathcal{R}}$, where $\tilde{p}(R)$ is the empirical probability of $\vtheta' \in R$.\\
        Solve
        \small
        \begin{equation}\label{opt:approx}
        \tag*{(APPROX-LP)}
        \begin{aligned}
            \max_{p(\sigma = a|R), \forall a \in \mathcal{A}, R \in \widetilde{\mathcal{R}}} \quad &\sum_{a \in \mathcal{A}} \sum_{R \in \widetilde{\mathcal{R}}} \tilde{p}(R) p(\sigma = a|R) u_{dm}(a)\\
            \text{s.t.} \quad &\sum_{R \in \widetilde{\mathcal{R}}} p(\sigma = a|R)\tilde{p}(R)(u_{ds}(a, R) - u_{ds}(a', R) + \epsilon) \geq 0, \; \forall a, a' \in \mathcal{A}\\
            &\sum_{a \in \mathcal{A}} p(\sigma = a|R) = 1, \; \forall R \in \widetilde{\mathcal{R}}, \; \; \;
            p(\sigma = a|R) \geq 0, \; \forall R \in \widetilde{\mathcal{R}}, a \in \mathcal{A}.\\
        \end{aligned}
        \end{equation}
        \normalsize
        Return signaling policy $\widehat{\mathcal{S}} := \{p(\sigma = a|R_{\vtheta})\}_{\forall a \in \mathcal{A}}$.\\
        \caption{Approximation Algorithm for \ref{opt:exact}}
        \label{alg:approx}
\end{algorithm}

\begin{proof}
First, since the approximation algorithm solves an approximation LP (APPROX-LP) of polynomial size, it runs in polynomial time.

\begin{lemma}\label{lem:poly}
Algorithm \ref{alg:approx} runs in poly($m, \frac{1}{\epsilon}$) time.
\end{lemma}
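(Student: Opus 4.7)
The plan is to bound the size of every object Algorithm~\ref{alg:approx} constructs and then appeal to the known polynomial-time solvability of linear programs with rationally-bounded coefficients. I will walk through the algorithm line by line, check each step's cost, and verify that every quantity is polynomial in $m$ and $1/\epsilon$ (with a mild $\log(1/\delta)$ dependence that is polynomial).

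First I would observe that the sample size
\[
K \;=\; \left\lceil \tfrac{2}{\epsilon^2}\log\!\bigl(\tfrac{2(m^2+1)}{\delta}\bigr)\right\rceil
\]
is already $\mathrm{poly}(m,1/\epsilon,\log(1/\delta))$. Sampling the $K-1$ additional assessment rules $\widetilde{\vTheta}$ from $\Pi$ is $\mathrm{poly}(K)$ work in the standard sampling oracle model. Next, I need to bound the cost of building the empirical region structure $(\widetilde{\mathcal{R}}, \tilde p)$. Using the characterization of equivalence regions from Definition~\ref{def:region} (equivalently, the ``set of actions receiving a positive classification'' signature introduced in Appendix~\ref{sec:regions}), each sampled $\vtheta_j$ can be labelled with its region in $O(m)$ time by computing $\mathrm{sign}\{(\vx_0+\Delta\vx(a))^\top\vtheta_j\}$ for every $a\in\mathcal{A}$. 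Bucketing the $K$ resulting signatures then produces $\widetilde{\mathcal{R}}$ together with empirical probabilities $\tilde p(R)$ in $O(Km)$ total time, and crucially gives the bound $|\widetilde{\mathcal{R}}|\le K$.

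The central step is to bound the size of \ref{opt:approx}. It has $|\widetilde{\mathcal{R}}|\cdot m \le Km$ decision variables $p(\sigma=a\mid R)$; at most $m^2$ approximate BIC constraints; $|\widetilde{\mathcal{R}}|\le K$ simplex-normalization equalities; and $Km$ non-negativity constraints. All coefficients $\tilde p(R)$, $u_{ds}(a,R)-u_{ds}(a',R)+\epsilon$, and $u_{dm}(a)$ are either empirical frequencies over $K$ samples or fixed problem primitives, hence are representable with polynomially many bits. Therefore \ref{opt:approx} is an LP whose encoding length is $\mathrm{poly}(K,m)=\mathrm{poly}(m,1/\epsilon,\log(1/\delta))$, and so it is solvable in polynomial time by standard LP algorithms (ellipsoid or interior-point). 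Returning the single conditional distribution $\{p(\sigma=a\mid R_{\vtheta})\}_{a\in\mathcal{A}}$ is $O(m)$ additional work. Summing the per-step costs yields the claimed $\mathrm{poly}(m,1/\epsilon)$ runtime.

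The only mildly subtle point is the implicit assumption that $\vtheta\in\vTheta$, the empirical sample $\widetilde{\vTheta}$, and the region signatures can be represented and compared in polynomial time; I would spell this out explicitly and note that it holds under the paper's standing finite-action, linear-scoring-rule setup. Beyond that, there is no genuine obstacle: the lemma is essentially a bookkeeping exercise tallying the sizes of the inputs, intermediate data structures, and LP that Algorithm~\ref{alg:approx} processes, together with the polynomial-time solvability of LPs of polynomial encoding length. The $\log(1/\delta)$ factor is benign and is the reason the statement asserts $\mathrm{poly}(m,1/\epsilon)$ rather than any stronger dependence.
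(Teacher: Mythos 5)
Your proposal is correct and follows essentially the same route as the paper's proof: bound $K$, observe that assigning each sample to its equivalence region costs $O(m)$ per check so the empirical region structure is built in polynomial time, and note that \ref{opt:approx} has polynomially many variables and constraints and is therefore solvable by the ellipsoid method. Your version is slightly more explicit about the region-signature bookkeeping and the LP's encoding length, but these are refinements of the same argument rather than a different approach.
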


By bounding the approximation error in the BIC constraints of \ref{opt:approx}, we show that the resulting policy satisfies approximate BIC.

\begin{lemma}\label{lem:eps-BIC}
Algorithm \ref{alg:approx} implements an $\epsilon$-BIC signaling policy.
\end{lemma}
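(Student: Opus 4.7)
The proof decomposes into the three claims already flagged: (i) polynomial runtime, (ii) $\epsilon$-BIC, and (iii) $\epsilon$-optimality with probability at least $1-\delta$. Lemma \ref{lem:poly} is essentially immediate, since \ref{opt:approx} has at most $mK$ decision variables and $m^2 + mK$ constraints, and $K = O(\epsilon^{-2} \log(m/\delta))$ is polynomial in $m$ and $1/\epsilon$. The real content lies in showing that the empirical LP inherits the good properties of \ref{opt:exact} up to additive error $\epsilon$, which in both the BIC and the optimality claims is a concentration-of-measure argument applied to quantities that are bounded in $[-2,2]$ (since $u_{ds}$ is a signed indicator shifted by a cost and $u_{dm}$ is a bounded function of the recommendation).

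The key observation powering both (ii) and (iii) is that by Step 2 of the algorithm, the multiset $\widetilde{\vTheta}$ has the same distribution as $K$ i.i.d.\ samples from $\pi$ \emph{marginally}: fixing $\vtheta$ and then placing it in a uniformly random position $\ell$ is equivalent to drawing $\vtheta_1,\ldots,\vtheta_K \sim \pi$ and then labeling one of them as the ``query'' point. This means we can apply Hoeffding's inequality to the empirical averages $\sum_{R \in \widetilde{\mathcal{R}}} \tilde{p}(R)\,g(a,a',R)$ defining each BIC constraint (with $g(a,a',R) = p(\sigma=a|R)(u_{ds}(a,R)-u_{ds}(a',R))$) and to the analogous empirical objective $\sum \tilde{p}(R) p(\sigma=a|R) u_{dm}(a)$. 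Each of these is an average of $K$ bounded random variables, so Hoeffding gives that each deviates from its expectation by at most $\epsilon/2$ with probability at least $1 - 2 \exp(-K\epsilon^2/2)$. A union bound over the $m^2$ BIC constraints plus the single objective constraint, together with the choice of $K$ in Step 1, makes the total failure probability at most $\delta$.

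From here, Lemma \ref{lem:eps-BIC} follows by taking any feasible $\widehat{\mathcal{S}}$ of \ref{opt:approx}: its empirical BIC constraint has built-in slack $+\epsilon$, so by concentration the corresponding \emph{true} expected utility gap is at least $-\epsilon$, i.e.\ $\widehat{\mathcal{S}}$ is $\epsilon$-BIC. For the $\epsilon$-optimality claim, I would compare the value $\mathrm{OPT}_{\text{emp}}$ of \ref{opt:approx} with the value $\mathrm{OPT}$ of \ref{opt:exact} in two directions. Upward: the true optimal signaling policy $\mathcal{S}^*$ satisfies the BIC constraints of \ref{opt:exact} with zero slack; by concentration, on the good event it satisfies the empirical constraints with slack at least $-\epsilon/2 + \epsilon > 0$, hence is feasible for \ref{opt:approx}, and its empirical objective is within $\epsilon/2$ of $\mathrm{OPT}$. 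Downward: the algorithm's output $\widehat{\mathcal{S}}$ achieves $\mathrm{OPT}_{\text{emp}}$ empirically, and by the same concentration its true expected decision-maker utility is within $\epsilon/2$ of that. Chaining these bounds yields $\mathbb{E}[u_{dm}(\widehat{\mathcal{S}})] \geq \mathrm{OPT} - \epsilon$, which is the $\epsilon$-optimality statement.

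\textbf{Anticipated obstacle.} The subtle step is the first one in the previous paragraph: justifying that the distribution used for concentration is correct even though we condition on $\vtheta_\ell = \vtheta$ (the decision maker's query point), since we ultimately need a guarantee that holds for the particular $\vtheta$ supplied as input. The uniform-$\ell$ trick is precisely what makes this go through, and I would state this explicitly as an exchangeability claim before invoking Hoeffding. A secondary subtlety is that the set of equivalence regions $\widetilde{\mathcal{R}}$ observed empirically is a random subset of $\mathcal{R}$; unobserved regions effectively have $\tilde p(R)=0$, which could in principle hurt feasibility of $\mathcal{S}^*$ in \ref{opt:approx}. This is harmless because restricting $\mathcal{S}^*$ to $\widetilde{\mathcal{R}}$ only drops nonnegative terms from both sides of the (relaxed) BIC constraints and its objective contribution is captured by the same Hoeffding bound; I would make this restriction explicit when arguing feasibility of $\mathcal{S}^*$ for \ref{opt:approx}.
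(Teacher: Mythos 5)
Your route for this lemma diverges from the paper's and has a genuine gap. You correctly identify the uniform-position trick in Step 2 as the key observation, but you then use it only to argue that $\widetilde{\vTheta}$ is distributed as $K$ i.i.d.\ draws from $\pi$, and try to obtain $\epsilon$-BIC by applying Hoeffding to the empirical BIC constraint of the \emph{output} policy $\widehat{\mathcal{S}}$. This step fails: $\widehat{\mathcal{S}}$ is the solution of \ref{opt:approx} and hence a function of the entire sample, so the summands $p(\sigma=a|R_{\vtheta_k})(u_{ds}(a,R_{\vtheta_k})-u_{ds}(a',R_{\vtheta_k}))$ are not independent random variables with a fixed function applied to them, and Hoeffding/McDiarmid does not bound the deviation of this data-dependent empirical average from the corresponding population quantity. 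Repairing this would require uniform convergence over the whole policy class (a product of simplices over up to $|\mathcal{R}|$ regions, which can be exponential), which is exactly what the sampling approach is designed to avoid. Concentration is legitimately used in the paper only in the other direction (Lemma~\ref{lem:high-prob}), where the policy $\widetilde{\mathcal{S}}$ being tested for empirical feasibility is built from the true optimum $\mathcal{S}^*$ and is therefore independent of the samples.

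The paper's proof of the BIC lemma is instead a one-line exchangeability argument with no concentration and no failure probability: by the principle of deferred decisions, conditional on the unordered multiset $\widetilde{\vTheta}$ the true $\vtheta$ is uniformly distributed over it; the constraints of \ref{opt:approx} state precisely that $\widehat{\mathcal{S}}$ is $\epsilon$-BIC with respect to that uniform (empirical) distribution; and the law of iterated expectations over $\widetilde{\vTheta}$ lifts this to the prior $\Pi$. Note this yields \emph{exact} $\epsilon$-BIC in expectation, which is what the lemma claims, whereas your argument --- even granting the concentration step --- would only give a high-probability guarantee, and with the stated slack of $+\epsilon$ in the LP plus a deviation of $\epsilon/2$ you would land at $(3\epsilon/2)$-BIC rather than $\epsilon$-BIC. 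Your treatment of parts (i) and (iii) of Theorem~\ref{thm:approx} (the runtime count and the two-directional comparison of $\mathrm{OPT}_{\text{emp}}$ with $\mathrm{OPT}$ via a fixed near-optimal policy) does track the paper's argument, modulo the same caveat that the downward direction must also be routed through a sample-independent policy.
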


Next, we show that a feasible solution to \ref{opt:approx} exists which achieves expected decision maker utility at least OPT - $\epsilon$ with probability at least $1 - \delta$. In order to do so, we first show that there exists an approximately optimal solution $\widetilde{\mathcal{S}}$ to \ref{opt:exact} such that each signal is either (i) \emph{large} (i.e., output with probability above a certain threshold), or (ii) \emph{honest} (i.e., the signal recommends the action the decision subject would take, had they known the true assessment rule $\vtheta$). Next, we show that $\widetilde{\mathcal{S}}$ is a feasible solution to \ref{opt:approx} with high probability by applying McDiarmid's inequality \cite{mcdiarmid1989method} and a union bound.

\begin{lemma}\label{lem:lh}
There exists an $\frac{\epsilon}{2}$-optimal signaling policy $\widetilde{\mathcal{S}}$ that is large or honest.
\end{lemma}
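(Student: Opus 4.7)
The plan is to construct $\widetilde{\mathcal{S}}$ by starting from an exact optimal solution $\mathcal{S}^*$ to \ref{opt:exact} and re-routing any probability mass that $\mathcal{S}^*$ places on signals with small marginal probability onto honest recommendations. Let $U$ upper bound $|u_{dm}(\cdot)|$, and fix a threshold $\tau = \Theta(\epsilon/(mU))$. Call signal $a$ \emph{small} under $\mathcal{S}^*$ if its marginal probability $\sum_R p(R)\,p^*(\sigma = a \mid R) \leq \tau$, and \emph{large} otherwise. For each equivalence region $R$, let $a^*(R) \in \arg \max_a u_{ds}(a, R)$ denote the agent's honest action at $R$ (well-defined since $u_{ds}(a,\cdot) - u_{ds}(a',\cdot)$ is constant on $R$ by Definition \ref{def:region}).

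I would then define $\widetilde{\mathcal{S}}$ region by region as follows: initialize $\widetilde{p}(\sigma = a \mid R) = p^*(\sigma = a \mid R)$ for every $a$; for each small $a \neq a^*(R)$ set $\widetilde{p}(\sigma = a \mid R) = 0$ and add the removed mass $p^*(\sigma = a \mid R)$ to $\widetilde{p}(\sigma = a^*(R) \mid R)$. By construction $\widetilde{\mathcal{S}}$ is a valid conditional distribution at every $R$. It is also large-or-honest: any signal that was large under $\mathcal{S}^*$ is left untouched on the ``non-honest'' part of its support and therefore remains large; any signal that was small now receives positive probability only at regions $R$ with $a^*(R) = a$, i.e., only honestly.

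To bound the utility loss, I would observe that the only mass the construction disturbs is that on small signals, whose total prior-weighted probability is at most $m\tau$, and that re-routing a unit of mass changes $u_{dm}$ by at most $2U$. Thus $\mathrm{OPT} - \mathrm{val}(\widetilde{\mathcal{S}}) \leq 2mU\tau \leq \epsilon/2$ for the chosen $\tau$.

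Finally, I would verify that $\widetilde{\mathcal{S}}$ is BIC. For each originally-small signal $a$, $\widetilde{\mathcal{S}}$ sends $a$ only at regions where $a = a^*(R)$, so the recommendation is trivially incentive-compatible. For each large signal $a$, the only modification to its conditional support is additional probability placed on regions $R$ with $a = a^*(R)$; since $u_{ds}(a, \vtheta) - u_{ds}(a', \vtheta) \geq 0$ for every $a'$ at such $R$, the BIC inequality $\mathbb{E}[u_{ds}(a, \vtheta) - u_{ds}(a', \vtheta) \mid \sigma = a] \geq 0$ that held under $\mathcal{S}^*$ is only tightened. The main obstacle I anticipate is exactly this BIC preservation step: the posterior over regions conditional on a previously large signal shifts when honest mass is added, and I have to show that the shift can only help incentives. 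The monotonicity observation above---that all added mass sits at regions where the received signal coincides with the ex-post optimal action---is the key insight that makes the argument go through.
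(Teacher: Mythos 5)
Your construction is essentially identical to the paper's: both start from the optimal BIC policy $\mathcal{S}^*$, re-route the mass of signals with small marginal probability (threshold $\epsilon/2m$ under the paper's $[0,1]$-normalized utilities) onto the honest action of each region, bound the utility loss by the total re-routed mass, and argue BIC is preserved because only honest recommendations are added. Your spelled-out verification that adding mass at regions where the signal is ex-post optimal only adds non-negative terms to the (unnormalized) BIC constraint is a correct and somewhat more explicit version of the paper's one-line justification.
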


\begin{lemma}\label{lem:high-prob}
With probability at least $1 - \delta$, $\widetilde{\mathcal{S}}$ is a feasible solution to \ref{opt:approx} and the expected decision maker utility from playing $\widetilde{\mathcal{S}}$ is at least OPT - $\epsilon$.
\end{lemma}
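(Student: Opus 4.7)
The plan is to verify the two conclusions of Lemma~\ref{lem:high-prob} separately. The expected-utility claim is essentially free from Lemma~\ref{lem:lh}: that lemma already constructs $\widetilde{\mathcal{S}}$ as an $\epsilon/2$-optimal signaling policy with respect to the true prior $\Pi$. Since the decision maker's expected utility is an expectation under $\Pi$ (which does not depend on the random sample $\widetilde{\vTheta}$), we immediately obtain $\mathbb{E}_{\vtheta \sim \Pi,\,\sigma \sim \widetilde{\mathcal{S}}(\vtheta)}[u_{dm}(\sigma)] \geq \mathrm{OPT}-\epsilon/2 \geq \mathrm{OPT}-\epsilon$. So the main technical work is to argue that, with probability at least $1-\delta$, $\widetilde{\mathcal{S}}$ also satisfies the empirical, $\epsilon$-slack BIC constraints of \ref{opt:approx}.

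To prove feasibility, my plan is to exploit the "large or honest" structure from Lemma~\ref{lem:lh}. Honest signals recommend the action the decision subject would take if $\vtheta$ were revealed, so $u_{ds}(a,R)-u_{ds}(a',R)\geq 0$ holds pointwise on the support of that signal; the corresponding terms in the empirical BIC sum are therefore nonnegative regardless of the sample. The entire probabilistic argument then reduces to controlling the empirical sum for each large signal $a$ and each deviator $a'\in\mathcal{A}$. Define
\[
f_{a,a'}(\vtheta_1,\dots,\vtheta_K) \;=\; \frac{1}{K}\sum_{k=1}^{K} p(\sigma=a\mid R_{\vtheta_k})\bigl(u_{ds}(a,R_{\vtheta_k})-u_{ds}(a',R_{\vtheta_k})\bigr),
\]
whose expectation over $\widetilde{\vTheta}\sim\pi$ equals the true-prior BIC quantity for $\widetilde{\mathcal{S}}$, which is $\geq 0$ because $\widetilde{\mathcal{S}}$ is BIC under $\Pi$ (being a feasible solution to \ref{opt:exact}). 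Since utilities lie in a bounded range and $p(\sigma=a\mid R)\in[0,1]$, $f_{a,a'}$ has coordinate sensitivity $O(1/K)$, so McDiarmid's inequality yields $|f_{a,a'}-\mathbb{E} f_{a,a'}|\leq \epsilon$ with probability at least $1-2\exp(-c K\epsilon^2)$ for an absolute constant $c>0$.

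Plugging in the algorithm's choice $K=\lceil\tfrac{2}{\epsilon^2}\log(2(m^2+1)/\delta)\rceil$ and taking a union bound over the $O(m^2)$ constraint pairs $(a,a')$ yields simultaneous concentration with total failure probability at most $\delta$. On that good event, the empirical BIC sum for every large signal is at least $-\epsilon$ times the corresponding marginal factor, which is absorbed by the explicit $+\epsilon$ slack in \ref{opt:approx}; combined with the pointwise nonnegativity from the honest contributions, this gives feasibility of $\widetilde{\mathcal{S}}$ (restricted to $\widetilde{\mathcal{R}}$) for APPROX-LP. The main obstacle, in my view, is calibrating the "large" threshold so that (i) only polynomially many signals require a concentration bound, keeping the union bound manageable, and (ii) the loss from replacing small signals by honest ones is at most $\epsilon/2$ in expected utility—this is exactly what Lemma~\ref{lem:lh} delivers, so with it in hand the remaining argument is the McDiarmid-plus-union-bound step sketched above.
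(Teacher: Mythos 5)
Your feasibility argument is essentially the paper's: apply McDiarmid's inequality to each empirical constraint $Y(a,a') = \frac{1}{K}\sum_{k} p(\sigma=a\mid R_{\vtheta_k})\bigl(u_{ds}(a,R_{\vtheta_k})-u_{ds}(a',R_{\vtheta_k})\bigr)$, use $\mathbb{E}[Y(a,a')]\ge 0$ (which holds because $\widetilde{\mathcal{S}}$ is BIC under $\Pi$), and union bound over the $m^2$ pairs. Your worry about calibrating the ``large'' threshold so that only polynomially many signals need a concentration bound is moot: there are only $m$ signals in total, and the paper does not use the large/honest split in the feasibility step at all --- that structure is needed only in Lemma~\ref{lem:lh}, to bound the optimality loss of $\widetilde{\mathcal{S}}$ by $\epsilon/2$.

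The genuine gap is in the second conclusion. You read ``the expected decision maker utility from playing $\widetilde{\mathcal{S}}$'' as the true-prior quantity $\sum_{a}\sum_{R\in\mathcal{R}} p(R)\,p(\sigma=a\mid R)\,u_{dm}(a)$, which is deterministic and $\ge \mathrm{OPT}-\epsilon/2$ by Lemma~\ref{lem:lh}, so you declare it free. But that quantity is not what Theorem~\ref{thm:approx} needs: the algorithm returns the maximizer $\widehat{\mathcal{S}}$ of the \emph{empirical} objective of \ref{opt:approx}, so to conclude anything about $\widehat{\mathcal{S}}$ one must lower-bound, with probability $1-\delta$, the empirical objective value $Z=\sum_{a}\sum_{R\in\widetilde{\mathcal{R}}}\tilde p(R)\,p(\sigma=a\mid R)\,u_{dm}(a)$ of the feasible point $\widetilde{\mathcal{S}}$, and $Z$ is a random function of the sample $\widetilde{\vTheta}$. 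The paper devotes a second McDiarmid application to exactly this: $\mathbb{E}[Z]\ge \mathrm{OPT}-\epsilon/2$ by Lemma~\ref{lem:lh} together with Claim~\ref{clm:exp}, and $|Z-\mathbb{E}[Z]|\le\epsilon/2$ with high probability, giving $Z\ge \mathrm{OPT}-\epsilon$. This is also why the sample size is set with $m^2+1$ (not $m^2$) events in the union bound --- the ``$+1$'' is precisely the objective concentration you dropped. As written, your proof establishes feasibility of $\widetilde{\mathcal{S}}$ but gives no lower bound on the value that \ref{opt:approx} actually optimizes, so the $\epsilon$-optimality of the returned policy does not follow.
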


By Lemmas \ref{lem:lh} and \ref{lem:high-prob}, the decision maker's expected utility will be at least OPT - $\epsilon$ with probability at least $1 - \delta$.
\end{proof}

\xhdr{Proof of Lemma \ref{lem:poly}}
\begin{proof} 
Lines 1-3 trivially run in poly($m, \frac{1}{\epsilon}$) time. $\tilde{p}(R)$, $\forall R \in \widetilde{\mathcal{R}}$ can be computed in poly($m, \frac{1}{\epsilon}$) time in an online manner as follows: for each $k \in \{1, \ldots, K\}$, check if $\vtheta_k$ belongs to an existing region. (Note that this can be done in $\mathcal{O}(m)$ time for each region.) If $\vtheta_k$ belongs to an existing region, update the existing empirical probabilities. Otherwise, create a new region. Finally, note that LP \ref{opt:approx} has poly($m, \frac{1}{\epsilon}$) variables and constraints, and can therefore be solved in poly($m, \frac{1}{\epsilon}$) time using, e.g., the Ellipsoid Algorithm.
\end{proof}

\xhdr{Proof of Lemma \ref{lem:eps-BIC}}
\begin{proof}
By the principle of deferred decisions, $\vtheta \sim \Pi'$, where $\Pi'$ is the uniform distribution over $\widetilde{\vTheta}$. \ref{opt:approx} implements an $\epsilon$-BIC signaling policy for $\Pi'$ by definition, so 
\begin{equation*}
    \mathbb{E}_{\vtheta \sim \Pi'}[u_{ds}(a, R) - u_{ds}(a', R) | \sigma = a] \geq -\epsilon, \forall a, a' \in \mathcal{A}.
\end{equation*}
Finally, apply the law of iterated expectation with respect to $\widetilde{\vTheta}$ to obtain the desired result.
\end{proof}

\xhdr{Proof of Lemma \ref{lem:lh}}
In order to prove Lemma \ref{lem:lh}, we make use of the following definitions.

\begin{definition}[Large signal]
A signal $\sigma = a$ is large if $p(\sigma = a) = \sum_{R \in \mathcal{R}} p(\sigma = a|R) p(R) > \frac{\epsilon}{2m}$.
\end{definition}

\begin{definition}[Honest signal]
A signal $\sigma = a$ is honest if $a \in \arg \max_{a' \in \mathcal{A}} u_{ds}(a',R)$.
\end{definition}

\begin{proof} 
We proceed via proof by construction. Let $\mathcal{S}^*$ be the optimal BIC signaling policy. Define the signaling policy $\widetilde{\mathcal{S}}$ as follows: for a given $\vtheta$, it first samples a signal $a \sim \mathcal{S}^*(\vtheta)$. If the signal is \emph{large}, output signal $\sigma = a$. Otherwise, output signal $\sigma = a \in \arg \max_{a' \in \mathcal{A}} u_{ds}(a',R_{\vtheta})$. Every signal of $\widetilde{\mathcal{S}}$ is trivially large or honest. $\widetilde{\mathcal{S}}$ is BIC since $\mathcal{S}^*$ is BIC and $\widetilde{\mathcal{S}}$ only replaces recommendations of $\mathcal{S}^*$ with honest recommendations. Finally, since the total probability of signals that are not large is at most $\frac{\epsilon}{2}$, and the decision maker's utilities are in $[0, 1]$, their expected utility is no worse than $\frac{\epsilon}{2}$ smaller than their expected utility from $\mathcal{S}^*$.
\end{proof}

\xhdr{Proof of Lemma \ref{lem:high-prob}}

The following claim will be useful when proving Lemma \ref{lem:high-prob}.
\begin{claim}\label{clm:exp}
The expected decision maker utility from playing $\widehat{\mathcal{S}}$ is $\sum_{a \in \mathcal{A}} \sum_{R \in \mathcal{R}} p(R) p(\sigma = a|R) u_{dm}(a)$.
\end{claim}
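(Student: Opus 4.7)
The plan is to prove the claim by straightforward application of the law of total expectation, partitioning the probability space according to which equivalence region $R \in \mathcal{R}$ contains the realized assessment rule $\vtheta$. The randomized signaling policy $\widehat{\mathcal{S}}$ from Algorithm \ref{alg:approx} induces, for every $a \in \mathcal{A}$ and $R \in \mathcal{R}$, a well-defined conditional probability $p(\sigma = a \mid R)$ equal to $\Pr[\widehat{\mathcal{S}}(\vtheta) = a \mid \vtheta \in R]$, where the probability is taken jointly over the draw of $\vtheta$ from the prior $\Pi$ restricted to $R$ and over the internal randomness of the algorithm (namely, the choice of $\ell$ and the auxiliary samples $\vtheta_1, \ldots, \vtheta_{\ell-1}, \vtheta_{\ell+1}, \ldots, \vtheta_K$ and the resulting LP solution). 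This is the natural interpretation of the symbol $p(\sigma=a\mid R)$ appearing in the claim.

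The key steps, in order, would be as follows. First, write
\[
\mathbb{E}_{\vtheta \sim \Pi,\, \sigma \sim \widehat{\mathcal{S}}(\vtheta)}[u_{dm}(\sigma)] \;=\; \sum_{a \in \mathcal{A}} u_{dm}(a) \cdot \Pr[\sigma = a],
\]
where the probability is over the joint randomness of $\vtheta \sim \Pi$ and the algorithm. Second, since the equivalence regions in $\mathcal{R}$ partition $\vTheta$, apply the law of total probability conditional on the unique region $R_\vtheta$ containing $\vtheta$:
\[
\Pr[\sigma = a] \;=\; \sum_{R \in \mathcal{R}} \Pr[\vtheta \in R]\cdot \Pr[\sigma = a \mid \vtheta \in R] \;=\; \sum_{R \in \mathcal{R}} p(R)\, p(\sigma = a \mid R).
\]
Third, substitute and exchange the order of summation to obtain
\[
\mathbb{E}[u_{dm}(\sigma)] \;=\; \sum_{a \in \mathcal{A}} \sum_{R \in \mathcal{R}} p(R)\, p(\sigma = a \mid R)\, u_{dm}(a),
\]
which is exactly the stated expression.

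The only subtle point, and the one to highlight as the main conceptual step rather than an obstacle, is that $\widehat{\mathcal{S}}$ is not a fixed map: for a given input $\vtheta$, the output distribution depends on the auxiliary sample $\widetilde{\vTheta}$ drawn inside Algorithm \ref{alg:approx}. One must therefore be explicit that $p(\sigma = a \mid R)$ in the claim's right-hand side denotes the marginalization of this conditional probability over both the algorithm's internal randomness and the conditional law of $\vtheta$ given $\vtheta \in R$. Once this definition is in hand, the identity is purely a law-of-total-expectation computation and requires no additional properties of $\widehat{\mathcal{S}}$ (in particular, it does not rely on the BIC or optimality guarantees). The claim is what then allows Lemmas \ref{lem:lh} and \ref{lem:high-prob} to translate bounds on the LP objective of \ref{opt:approx} (evaluated at the sampled regions) into bounds on the true expected decision maker utility.
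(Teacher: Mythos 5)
Your argument is correct for the claim as literally stated, and at its core it is the same computation the paper performs: a decomposition of the expected utility over the partition of $\vTheta$ into equivalence regions. The difference is in the route and in what the intermediate quantities are. You define $p(\sigma=a\mid R)$ as the conditional probability marginalized over \emph{all} of Algorithm \ref{alg:approx}'s internal randomness, after which the identity is essentially definitional (law of total expectation), and you correctly note it needs no BIC or optimality properties. The paper instead first invokes the principle of deferred decisions --- conditional on the multiset of $K$ sampled rules, the true $\vtheta$ sits at a uniformly random position among them --- to identify the expected utility with $\mathbb{E}_{\widetilde{\vTheta}}\bigl[\sum_{a}\sum_{R\in\widetilde{\mathcal{R}}}\tilde{p}(R)\,p(\sigma=a\mid R)\,u_{dm}(a)\bigr]$, i.e.\ the expectation of the \emph{empirical} objective of \ref{opt:approx}, and only then applies iterated expectation to replace $\tilde{p}(R)$ by $p(R)$. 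That intermediate identity is what the claim is actually consumed for in the proof of Lemma \ref{lem:high-prob}: there one needs $\mathbb{E}[Z]=\sum_a\sum_{R\in\mathcal{R}}p(R)p(\sigma=a\mid R)u_{dm}(a)$ where $Z$ is the empirical objective evaluated at the \emph{fixed} policy $\widetilde{\mathcal{S}}$ of Lemma \ref{lem:lh}, i.e.\ the unbiasedness of the sampled objective. For a fixed region-based policy your marginalized $p(\sigma=a\mid R)$ coincides with the policy's own probabilities, so your argument does cover that use case; but as written, your tautological reading of the right-hand side does not by itself produce the bridge between the empirical and population objectives, which is the substantive content the paper extracts from this claim. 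If you adopt your route, you should state and prove that unbiasedness step separately (it is a one-line i.i.d.\ expectation computation) so that Lemma \ref{lem:high-prob} still goes through.
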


\begin{proof}
The expected decision maker utility from playing $\widehat{\mathcal{S}}$ is $\mathbb{E}_{\vtheta \sim \Pi} [\sum_{a \in \mathcal{A}} \sum_{R \in \widetilde{\mathcal{R}}} \tilde{p}(R) p(\sigma = a|R) u_{dm}(a)] = \mathbb{E}_{\vtheta \sim \Pi} [\sum_{a \in \mathcal{A}} \sum_{R \in \mathcal{R}} \tilde{p}(R) p(\sigma = a|R) u_{dm}(a)]$, by the principle of deferred decisions. Apply the law of iterated expectation with respect to $\widetilde{\vTheta}$ to obtain the desired result.
\end{proof}

Additionally, we will make use of McDiarmid's inequality \cite{mcdiarmid1989method}, stated for completeness below.
\begin{lemma}[McDiarmid's Inequality \cite{mcdiarmid1989method}]\label{lem:mcd}
Let $X_1, \ldots, X_n$ be independent random variables, with $X_k$ taking values in a set $A_k$ for each $k$. Suppose that the (measurable) function $f: \Pi A_k \rightarrow \mathbb{R}$ satisfies 
\begin{equation*}
    |f(\vx) - f(\vx')| \leq c_k
\end{equation*}
whenever the vectors $\vx$ and $\vx'$ differ only in the kth coordinate. Let $Y$ be the random variable $f(X_1, \ldots, X_n)$. Then for any $t > 0$.
\begin{equation*}
    \mathbb{P}(|Y - \mathbb{E}[Y]| \geq t) \leq 2 \exp \left( -2t^2 / \sum_k c_k^2 \right). 
\end{equation*}
\end{lemma}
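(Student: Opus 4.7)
The plan is to prove McDiarmid's inequality by constructing the Doob martingale associated with $Y$ and then applying a conditional Hoeffding/Azuma-type Chernoff argument. Specifically, I would define $Z_0 = \mathbb{E}[Y]$ and, for $k=1,\ldots,n$, the conditional expectations $Z_k = \mathbb{E}[Y \mid X_1,\ldots,X_k]$, so that $Z_n = Y$ (almost surely) and $\{Z_k\}$ is a martingale with respect to the filtration generated by the $X_k$'s. The martingale differences $D_k = Z_k - Z_{k-1}$ then satisfy $\mathbb{E}[D_k \mid X_1,\ldots,X_{k-1}] = 0$ and $Y - \mathbb{E}[Y] = \sum_{k=1}^n D_k$, reducing the tail bound to controlling a sum of bounded martingale differences.

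The key step is to turn the bounded differences hypothesis into a conditional range bound on $D_k$. Conditioned on $X_1,\ldots,X_{k-1}$, define
\begin{equation*}
L_k = \inf_{x} \bigl(\mathbb{E}[Y \mid X_1,\ldots,X_{k-1},X_k = x] - Z_{k-1}\bigr), \quad U_k = \sup_{x} \bigl(\mathbb{E}[Y \mid X_1,\ldots,X_{k-1},X_k = x] - Z_{k-1}\bigr),
\end{equation*}
so that $D_k \in [L_k, U_k]$ almost surely. By averaging $f(X_1,\ldots,X_n)$ over the remaining coordinates $X_{k+1},\ldots,X_n$ (using independence so that this averaging commutes across different realizations of $X_k$) and applying the coordinatewise bound $|f(\vx)-f(\vx')|\le c_k$, one obtains $U_k - L_k \le c_k$. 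This coupling-style argument is the main technical obstacle, since it requires carefully using independence to argue that replacing $X_k$ by an independent copy changes the conditional expectation by at most $c_k$.

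Next I would apply Hoeffding's lemma conditionally: for any random variable $D$ with $\mathbb{E}[D\mid \mathcal{F}]=0$ and $D\in[L,U]$ a.s.\ with $U-L \le c$, one has $\mathbb{E}[e^{\lambda D}\mid \mathcal{F}] \le \exp(\lambda^2 c^2/8)$. Applying this with $\mathcal{F} = \sigma(X_1,\ldots,X_{k-1})$ and telescoping via the tower property,
\begin{equation*}
\mathbb{E}\!\left[e^{\lambda(Y-\mathbb{E}[Y])}\right] = \mathbb{E}\!\left[\prod_{k=1}^n e^{\lambda D_k}\right] \le \exp\!\left(\frac{\lambda^2}{8}\sum_{k=1}^n c_k^2\right).
\end{equation*}

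Finally, a Chernoff bound gives $\mathbb{P}(Y-\mathbb{E}[Y]\ge t) \le \exp(-\lambda t + \lambda^2 \sum_k c_k^2 / 8)$, which I would optimize by choosing $\lambda = 4t/\sum_k c_k^2$ to obtain the one-sided bound $\exp(-2t^2/\sum_k c_k^2)$. Applying the identical argument to $-Y$ (whose Doob martingale satisfies the same bounded-difference property with the same $c_k$'s) handles the lower tail, and a union bound yields the factor of $2$ in the stated two-sided inequality. The entire argument is standard once the conditional range bound $U_k-L_k\le c_k$ is established, which I expect to be the only genuinely delicate step.
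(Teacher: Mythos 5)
Your proposal is a correct and complete outline of the standard Doob-martingale proof of McDiarmid's inequality: the conditional range bound $U_k - L_k \le c_k$ via independence, the conditional Hoeffding lemma, the telescoped moment-generating-function bound, the Chernoff optimization at $\lambda = 4t/\sum_k c_k^2$, and the two-sided bound from applying the one-sided argument to $-f$ all check out. The paper itself gives no proof of this lemma --- it is stated verbatim and attributed to \cite{mcdiarmid1989method} as an imported tool --- and your argument is essentially the one in that original reference, so there is nothing to reconcile.
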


We are now ready to prove Lemma \ref{lem:high-prob}.

\begin{proof}
First, note that the $\epsilon$-BIC constraints can be rewritten using the observed decision rules as
\begin{equation*}
    \frac{1}{K} \sum_{k=1}^K p(\sigma = a | R_k) (u_{ds}(a,R_k) - u_{ds}(a', R_k)) \geq -\epsilon, \; \; \forall a, a' \in \mathcal{A},
\end{equation*}
where $\vtheta_k \in R_k$. Note that this is a bounded function of $\vtheta_1, \ldots, \vtheta_K$. Let $Y(a, a') = \frac{1}{K} \sum_{k=1}^K p(\sigma = a | R_k) (u_{ds}(a,R_k) - u_{ds}(a', R_k))$. Note that $\mathbb{E}[Y(a, a')] = \sum_{R \in \mathcal{R}} p(\sigma = a|R)p(R)(u_{ds}(a, R) - u_{ds}(a', R))$. 

Applying Lemma \ref{lem:mcd}, we see that $\forall a, a' \in \mathcal{A}$,

\begin{equation*}
    \mathbb{P}(|Y(a,a') - \mathbb{E}[Y(a,a')]| \geq \epsilon) \leq 2\exp \left( -K \epsilon^2 / 2 \right).
\end{equation*}

Similarly, let $Z = \sum_{a \in \mathcal{A}} \sum_{R \in \widetilde{\mathcal{R}}} \tilde{p}(R) p(\sigma = a|R) u_{dm}(a) = \frac{1}{K} \sum_{k=1}^K \sum_{a \in \mathcal{A}} p(\sigma = a | R_k) u_{dm}(a)$ (where $R_k$ contains $\vtheta'_k$). By Claim \ref{clm:exp}, $\mathbb{E}[Z] = \sum_{a \in \mathcal{A}} \sum_{R \in \mathcal{R}} p(R) p(\sigma = a|R) u_{dm}(a)$. Applying Lemma \ref{lem:mcd},

\begin{equation*}
    \mathbb{P}(|Z - \mathbb{E}[Z]| \geq \epsilon/2) \leq 2\exp \left( -K \epsilon^2 /2 \right).
\end{equation*}

Applying the union bound, we see that the probability that \emph{all} $m^2+1$ above inequalities hold is at least $2(m^2 + 1)\exp \left( -K \epsilon^2 /2 \right)$. By inverting the tail bound and picking $K = \frac{2}{\epsilon^2} \log \left( \frac{2(m^2+1)}{\delta}\right)$, we get that $|Z - \mathbb{E}[Z]| \leq \epsilon/2$ \emph{and} $|Y(a,a') - \mathbb{E}[Y(a,a')]| \leq \epsilon$, $\forall a, a' \in \mathcal{A}$, with probability at least $1 - \delta$. Therefore, with probability at least $1 - \delta$, $\widetilde{\mathcal{S}}$ is a feasible solution for LP \ref{opt:approx} and the objective value is at most OPT $- \frac{\epsilon}{2} - \frac{\epsilon}{2} =$ OPT $- \epsilon$. 
\end{proof}
\section{Instantiating 1-Dimensional Scenario}
\label{sec:1d-experiment}

In this section we instantiate the example introduced in Section~\ref{sec:ex} and demonstrate the decision maker's gain in utility from the optimal signaling policy over other baselines.
To contextualize this simple synthetic setup, consider a banking institution deciding whether approve a loan application from an applicant based on credit score $\vx_0 \in [300, 850]$ with a simple threshold classifier. The bank approves the application ($\hat{y} = 1$) if $\vx_0 + \theta > 0$ and rejects ($\hat{y} = -1$) otherwise.
Here, we assume the ground-truth threshold value used by the decision maker to be 670 (i.e. $\theta = -670$), which is typically considered as a decent credit score. 
Recall that $a_\emptyset = $ ``do nothing'' and $a_1 = $ ``pay off existing debt'' and set the utility of the decision maker to be $u_{dm}(a_1) = 1, u_{dm}(a_\emptyset) = 0$, as, for the sake of our illustration, we assume credit score to be a good measure of credit-worthiness.
Finally, we assume the prior to be $\pi(\theta) \sim \mathcal{N}(\mu_{\theta}, \sigma_{\theta}^2)$.

\begin{figure}[t]
    \centering
     \begin{subfigure}[b]{0.3\textwidth}
         \centering
         \includegraphics[width=\textwidth]{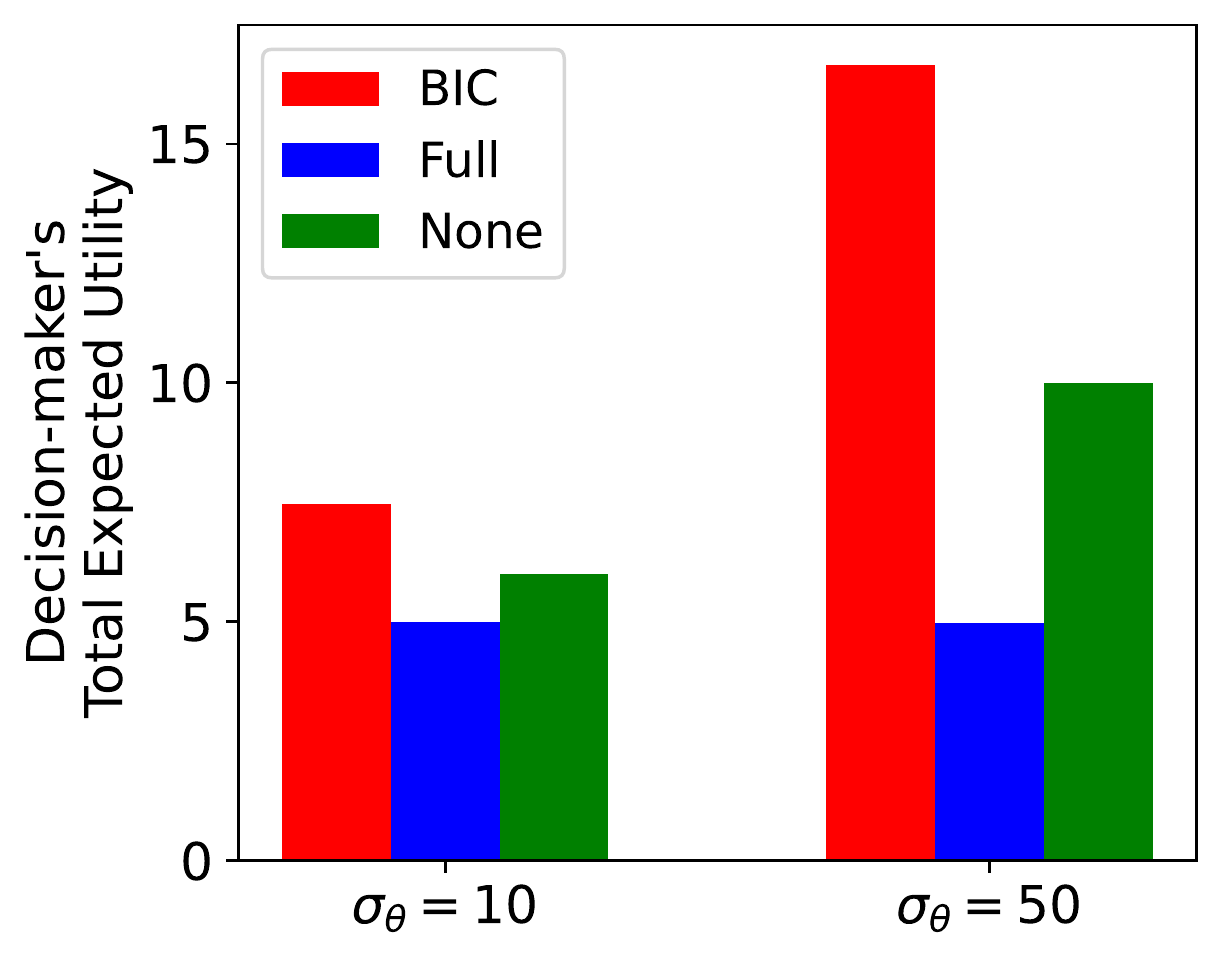}
         \caption{Expected decision maker utility summed up across different $\vx_0$ values under different standard deviation $\sigma_{\theta}$ of the prior.}
         \label{fig:total-utils}
     \end{subfigure}
     \hspace{15pt}
     \begin{subfigure}[b]{0.3\textwidth}
         \centering
         \includegraphics[width=\textwidth]{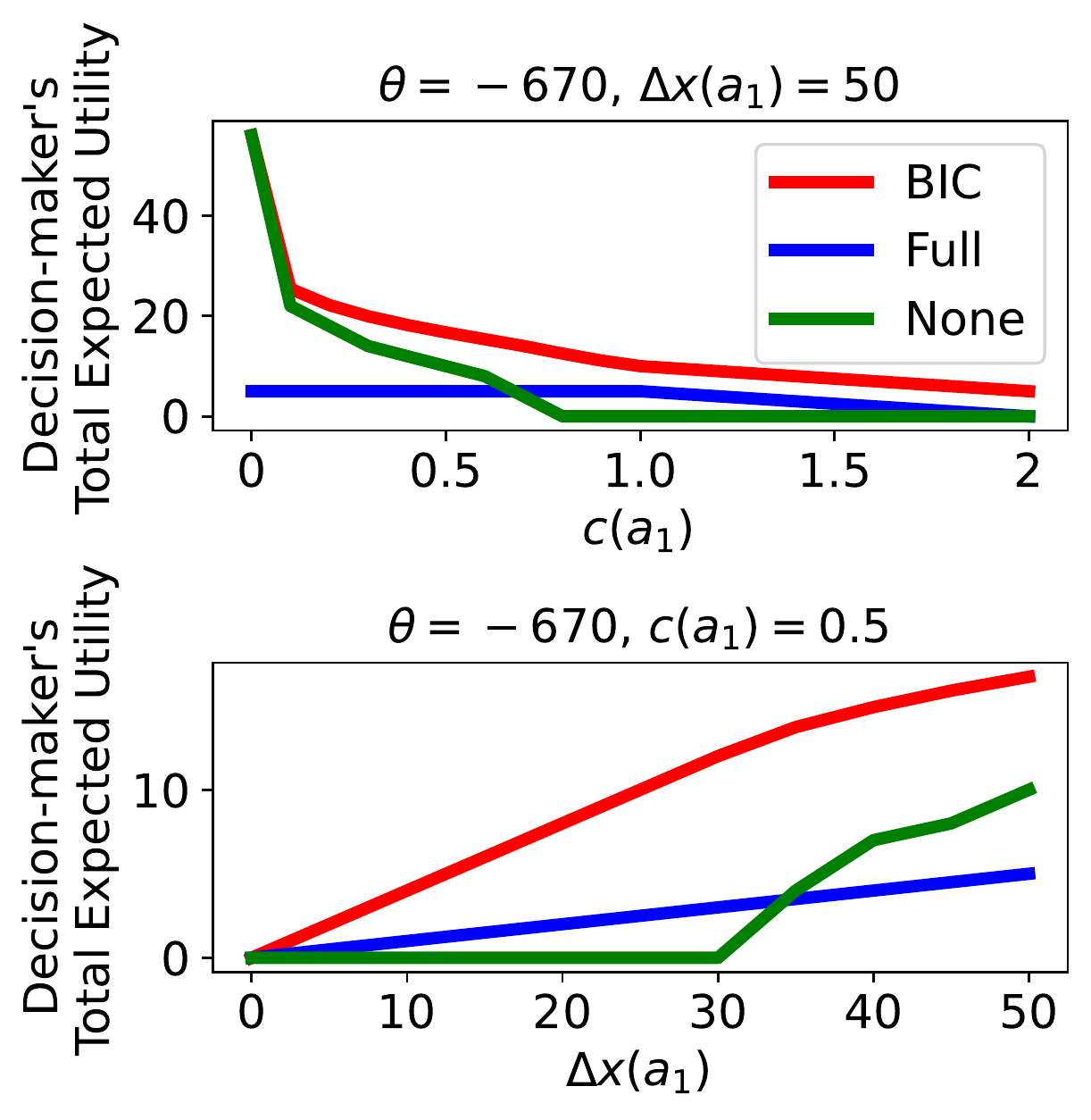}
         \caption{Total expected decision maker utility under different $c(a_1)$ (top) and $\Delta \vx$ (bottom)}
         \label{fig:simple_utility_results_cost_delta}
     \end{subfigure}
    \caption{(a) Total expected decision maker utility summed across difference $\vx_0$ for our optimal signaling policy (\textbf{BIC}, red), against the two baselines: revealing full information about the assessment rule (\textbf{Full}, blue), and revealing no information (\textbf{None}, green). As the decision subject's uncertainty about the true threshold $\theta$ (measured by $\sigma_{\theta}$) increases, the advantage of the optimal signaling policy becomes more visible. (c) When taking action $a_1$ becomes cost-prohibitive (high $c(a_1)$) or less effective (small $\Delta \vx(a_1)$), the decision maker’s utility decreases as there is less incentive for the decision subject to take the action. Nevertheless, the optimal signaling policy yields consistently higher decision maker utility compared to the baselines.}
    \label{fig:simple_utility_results_prior}
\end{figure}

In Figure~\ref{fig:simple_utility_results_prior},
we verify that our optimal signaling policy (\textbf{BIC}, red) yields higher decision maker utility compared to the two baselines: revealing full information (\textbf{Full}, blue) and revealing no information (\textbf{None}, green)\footnote{We set the decision subject cost of taking action $a_1$ to $c(a_1) = 0.5$, and $\Delta \vx(a_1) = 40$ (i.e., action $a_1$ improves an applicant's credit score by $40$ points).}. 
To measure the total amount of decision maker's expected utility yielded by each policy, we assume a uniform distribution of the decision subjects' credit scores $\vx_0$ in the population and take the sum of expected decision maker utility values across different scores. 
We plot these total utility values in Figure~\ref{fig:total-utils}, and as expected, the larger the $\sigma_\theta$ is, the more comparative advantage our method has over the baselines. As the decision subjects' uncertainty about the true $\theta$ increases (i.e., the standard deviation of the prior distribution increases from 10 to 50), the decision maker benefits from our optimal signaling policy even more.



When action $a_1$ becomes more cost-prohibitive (or less effective), as there is less incentive for the decision subjects to take the action, we expect the decision maker's utility to decrease\footnote{In this setting, we set $\mu_{\theta} = -650$ and $\sigma_{\theta}=50$ so that the decision subjects are considered to have a reasonable estimate of the true threshold $\theta=-670$, to make the situation more favorable to the baselines.}.
As shown in Figure~\ref{fig:simple_utility_results_cost_delta}, we indeed observe such a trend as $c(a_1)$ increases (top) and $\Delta \vx(a_1)$ decreases (bottom). 
Nevertheless, our optimal signaling policy yields consistently higher total decision maker utility compared to the baselines across all conditions.

In Figure~\ref{fig:2d-slices}, we show 2-D slices of Figure~\ref{fig:3d-plot} along the $c(a)$ axis (left) and $\Delta \vx (a)$ axis (right).
As is expected, with small cost and sufficiently large $\Delta \vx (a)$ (top row, right), the two baselines become as effective as the optimal signaling policy. Interestingly, we note that changes in different $( c(a_i), \Delta \vx(a_i) )$ result in significantly different rates of change in decision maker utility. For example, the optimal signaling policy (red) and revealing full information (blue) are more resistant to the increase in $c(a_4)$ in range $[0, 0.25]$ than they are for the increase in other $c(a_i)$, $i\neq4$, showing a concave drop in utility rather than a convex one (bottom row, left). Such behavior can be attributed to the relative weight of each feature on the learned assessment rule, where $| \theta_4 | > | \theta_3 | > | \theta_1| > |\theta_2|$. Because the fourth feature has the largest weight, taking action $a_4$ will have the largest effect on an individual's prediction. As a result, the decision maker utility is the least sensitive to increases in the cost of taking that action. Similarly, we observe that the degree to which changes in $\Delta \vx(a)$ affect the expected utility is more drastic for $a_4$ compared to other actions (middle row, right).

\section{Experiment Details and Additional Results}\label{sec:appdx-decision-maker-model}
\begin{table}[t]
    \centering
    \begin{tabular}{c c  c}
        Pair & Feature ($x_i$) & Action ($a_i$) \\ \hline \hline
       $(x_1, a_1)$ &  \# payments with high-utilization ratio & decrease this value \\ \hline
       $(x_2, a_2)$ & \# satisfactory payments & increase this value \\ \hline
       $(x_3, a_3)$ & \% payments that were not delinquent & increase this value \\ \hline
       $(x_4, a_4)$ & revolving balance to credit limit ratio & decrease this value \\ \hline
    \end{tabular}
    \caption{Decision subject's observable features from the HELOC dataset and corresponding actions to improve each feature. For simplicity, we assume that each action only affects one observable feature, although our model generally allows for more intricate relationships between actions and changes in observable features.}
    \label{tab:heloc-details}
\end{table}

\noindent \xhdr{Common prior} We assume the common prior over the realized assessment rule $\vtheta$ takes the form of a multivariate Gaussian $\mathcal{N}(\vtheta, \sigma^2 I_{4 \times 4})$ before training. This captures the setting in which both the decision maker and decision subjects have a good estimate of what the true model will be, but are somewhat uncertain about their estimate. We note that our methods extend to more complicated priors beyond the isotropic Gaussian prior we consider in this setting.

\noindent \xhdr{Changes in observable features} In order to examine the effects that different $\Delta \vx(a_i) (i\in\{1,2,3,4\}$) have on the decision maker's expected utility, we consider settings in which each $\Delta \vx(a_i)$ takes a value in $\{0, 0.25, 0.5, 0.75, 1\}$.

\noindent \xhdr{Utilities and costs of actions} As the decision maker views actions $\{a_1, a_2, a_3, a_4\}$ as equally desirable, we define $u_{dm}(a_i) = 1$, $i \in \{1, 2, 3, 4\}$ and $u_{dm}(a_\emptyset) = 0$.\footnote{We set $u_{dm}(a_1) = u_{dm}(a_2) = u_{dm}(a_3) = u_{dm}(a_4)$ for ease of exposition --- in general, actions can have different utility values based on their relative importance.} 
Since there are 1,320 individuals in our test dataset, the maximum utility the decision maker can obtain is 1,320. As proposed in \cite{rawal2020beyond}, we use the Bradley-Terry model \cite{10.2307/2334029} to generate the decision subject's cost $c(a_i)$ of taking action $a_i$, for $i = 1, 2, 3, 4$. See Appendix~\ref{sec:appdx-cost} for details on our exact generation methods. 

\begin{figure}[t]
    \centering
    \includegraphics[width=0.48\textwidth]{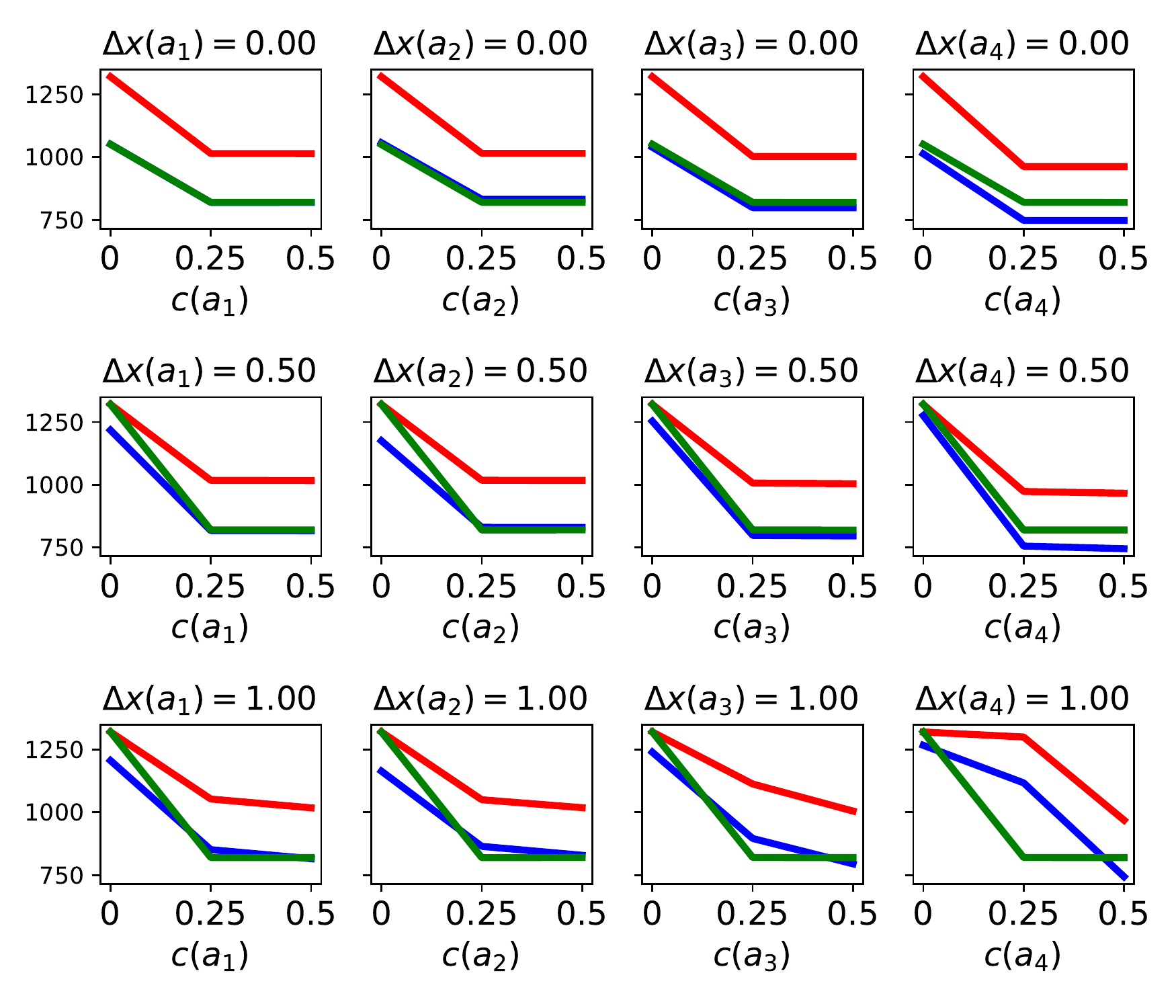}
    \includegraphics[width=0.48\textwidth]{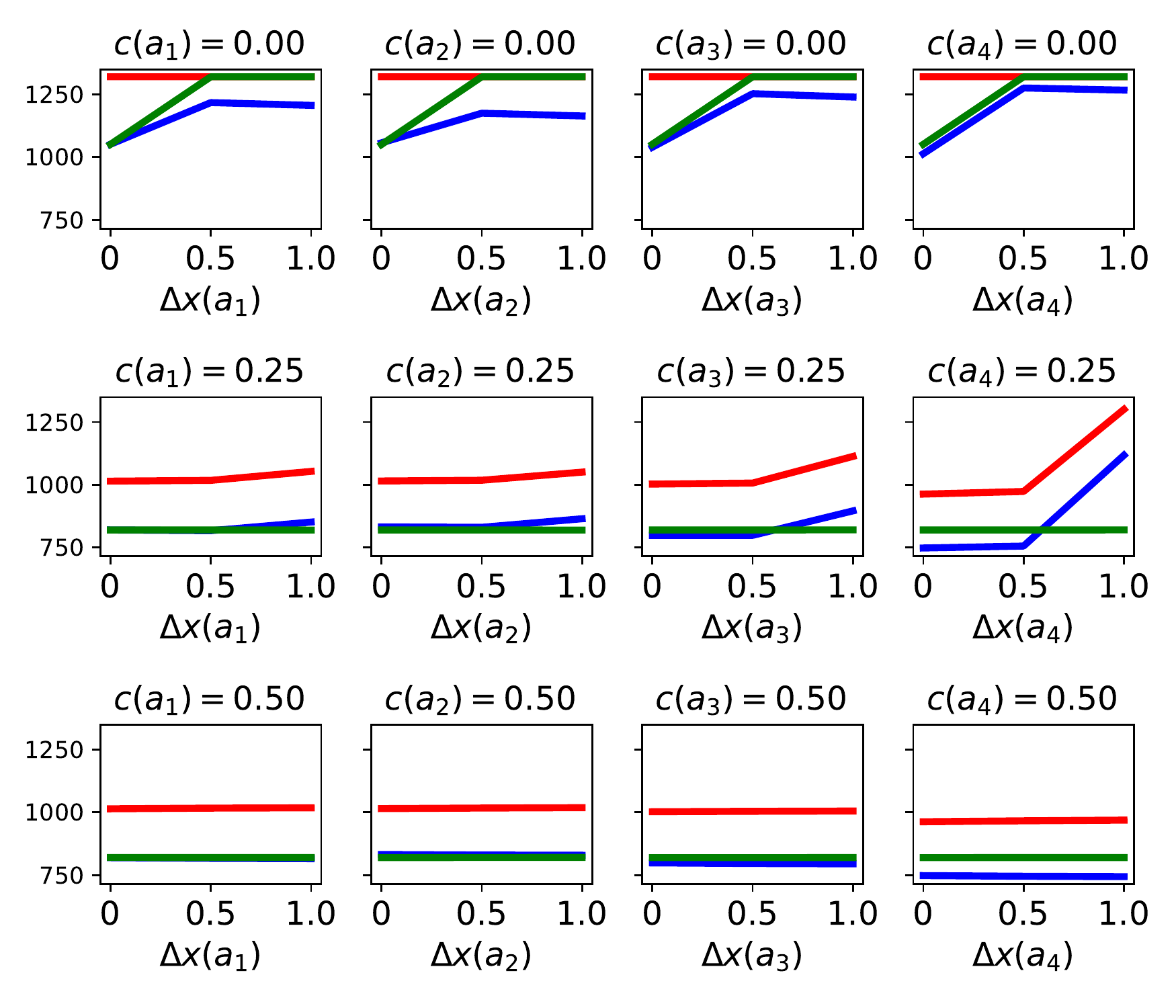}
    \caption{2-D slices of Figure~\ref{fig:3d-plot} across $c(a)$ (left) and $\Delta \vx(a)$ (right). Across these two axes, the optimal signaling policy (red) dominates the revealing full information (blue) and revealing no information (green), though it may be possible for (blue) and (green) to vary in terms of which provides higher decision maker utility.}
    \label{fig:2d-slices}
\end{figure}

\subsection{Remark on the decision maker's assessment rule for HELOC dataset}

To simulate a setting in which the decision maker employs a machine learning model for making decisions about the decision subjects, we train a simple logistic regression model on the subset of HELOC dataset.
We specifically work on four features selected in Table~\ref{tab:heloc-details}, and split the dataset into train/test set (7425, 1857 data points respectively). 
The test accuracy of the model was 71.08 percent, and the corresponding model coefficients were $\vtheta = [-0.22974527,  0.15633134,  0.52023116, -0.61600619]$ with the bias term $-0.08242841$. 
Note that each coefficient term has the sign that is aligned with how the desired action was defined in Table~\ref{tab:heloc-details} (i.e., for features where increasing the value is desirable, the sign is positive and vice-versa). 
To further make sure that the defined actions correctly align with the model, we select the test samples that the trained model made no mistakes on. 
This resulted in a total of 1,320 samples from the test set on which each policy was optimized.

\subsection{Computing different costs for HELOC dataset using Bradley-Terry model}
\label{sec:appdx-cost}

 While exact action costs may be unknown, it is often reasonable for the decision maker to know an \emph{ordering} over possible actions in terms of their cost for decision subjects. For example, it may be common knowledge that opening a new credit card is easier than paying off some existing amount of debt, but exactly \emph{how much} easier may be unclear. The Bradley-Terry model uses exponential score functions to model the probability that feature $x_i$ is more costly for a decision subject to take compared to feature $x_j$. Specifically, it assumes 

\begin{equation*}
    \mathbb{P}(a_i \succ a_j) = \frac{e^{c(a_i)}}{e^{c(a_i)} + e^{c(a_j)}}.
\end{equation*}

Given pairwise cost comparisons (generated from common knowledge or gathered from experts) we can estimate $\mathbb{P}(a_i \succ a_j)$ and solve for the parameters $\{c(a_i)\}_{i=1}^4$ using maximum likelihood estimation. In order to gain more insight into how different action cost orderings affect the decision maker utility, we consider several different ground-truth cost orderings over actions and simulate expert advice in order to estimate $\mathbb{P}(a_i \succ a_j)$, $\forall a_i, a_j \in \mathcal{A}$. While the expert advice is purely synthetic in our setting, this method provides a principled way to estimate action costs whenever input from domain experts (e.g., financial advisors) is available to the decision maker.

We use the following set of comparison inputs (manually generated) in Table~\ref{tab:cost1}-\ref{tab:cost4} to generate cost values with the relative ordering presented in Section~\ref{sec:experiments}. 
While these comparison inputs are generated arbitrarily for the simulations, these can be obtained by querying several domain experts and aggregating their answers regarding which feature is more difficult to change.   
The resulting cost values are shown in Table~\ref{tab:cost-values}.

\begin{table}[]
    \begin{subtable}[t]{0.48\textwidth}
    \begin{tabular}[t]{c c c c}
        Feature A & Feature B & \# (A $>$ B) & \# (A $<$ B) \\ \hline \hline
        $\vx_1$ & $\vx_2$ & 8 & 2 \\ \hline
        $\vx_1$ & $\vx_3$ & 9 & 1 \\ \hline
        $\vx_1$ & $\vx_4$ & 7 & 3 \\ \hline
        $\vx_2$ & $\vx_3$ & 2 & 8 \\ \hline
        $\vx_2$ & $\vx_4$ & 0 & 10 \\ \hline
        $\vx_3$ & $\vx_4$ & 1 & 9 \\ \hline
    \end{tabular}
    \caption{$c(a_1) > c(a_4) > c(a_3) > c(a_2)$}
    \label{tab:cost1}
    \end{subtable}
    \hspace{\fill}
    \begin{subtable}[t]{0.48\textwidth}
    \begin{tabular}[t]{c c c c}
        Feature A & Feature B & \# (A $>$ B) & \# (A $<$ B) \\ \hline \hline
        $\vx_1$ & $\vx_2$ & 2 & 8 \\ \hline
        $\vx_1$ & $\vx_3$ & 3 & 7 \\ \hline
        $\vx_1$ & $\vx_4$ & 4 & 6 \\ \hline
        $\vx_2$ & $\vx_3$ & 6 & 4 \\ \hline
        $\vx_2$ & $\vx_4$ & 7 & 3 \\ \hline
        $\vx_3$ & $\vx_4$ & 6 & 4 \\ \hline
    \end{tabular}
    \caption{$c(a_2) > c(a_3) > c(a_4) > c(a_1)$}
    \label{tab:cost2}
    \end{subtable}
    \begin{subtable}[t]{0.48\textwidth}
    \begin{tabular}[t]{c c c c}
        Feature A & Feature B & \# (A $>$ B) & \# (A $<$ B) \\ \hline \hline
        $\vx_1$ & $\vx_2$ & 2 & 8 \\ \hline
        $\vx_1$ & $\vx_3$ & 1 & 9 \\ \hline
        $\vx_1$ & $\vx_4$ & 4 & 6 \\ \hline
        $\vx_2$ & $\vx_3$ & 3 & 7 \\ \hline
        $\vx_2$ & $\vx_4$ & 7 & 3 \\ \hline
        $\vx_3$ & $\vx_4$ & 7 & 3 \\ \hline
    \end{tabular}
    \caption{$c(a_3) > c(a_2) > c(a_4) > c(a_1)$}
    \label{tab:cost3}
    \end{subtable}
    \hspace{\fill}
    \begin{subtable}[t]{0.48\textwidth}
    \begin{tabular}[t]{c c c c}
        Feature A & Feature B & \# (A $>$ B) & \# (A $<$ B) \\ \hline \hline
        $\vx_1$ & $\vx_2$ & 8 & 2 \\ \hline
        $\vx_1$ & $\vx_3$ & 9 & 1 \\ \hline
        $\vx_1$ & $\vx_4$ & 2 & 3 \\ \hline
        $\vx_2$ & $\vx_3$ & 7 & 8 \\ \hline
        $\vx_2$ & $\vx_4$ & 0 & 10 \\ \hline
        $\vx_3$ & $\vx_4$ & 1 & 9 \\ \hline
    \end{tabular}
    \caption{$c(a_4) > c(a_1) > c(a_3) > c(a_2)$}
    \label{tab:cost4}
    \end{subtable}
    \begin{subtable}[t]{0.7\textwidth}
    \centering
    \begin{tabular}[t]{c c c c c}
        Configuration & $c(a_1)$ & $c(a_2)$ & $c(a_3)$ & $c(a_4)$\\ \hline \hline
        (i)  &  0.5151 & 0.0282 & 0.0723 & 0.3844 \\ \hline
        (ii) &  0.1159 &  0.428  &  0.2758 & 0.1803 \\ \hline
        (iii) &  0.07640764 & 0.27692769 & 0.50635064 & 0.14031403 \\ \hline
        (iv)  &  0.2987 & 0.0428 & 0.0476 & 0.6109\\ \hline
    \end{tabular}
    \caption{Cost values learned by the Bradley-Terry model from the pair-wise comparison inputs above.}
    \label{tab:cost-values}
    \end{subtable}
    \caption{Comparison inputs used by the Bradley-Terry model to generate different cost configurations.}
    \label{tab:cost-comparison-inputs}
\end{table}

\subsection{Additional results for different cost and $\Delta \vx(a)$ configurations}
\label{sec:appdx-additional-results}

Figure \ref{fig:additional-details-configs} shows more exhaustive results on different cost configurations (i)-(iv) as defined in Table~\ref{tab:cost-values} and $\Delta \vx(a_i) \in {0, 0.25, 0.5, 0.75, 1.0}$ for $i=1,2,3,4$ on HELOC datset. For all configurations considered, our optimal signaling policy (red) consistently yields utility no less than both baselines: revealing full information about the assessment rule (blue), and revealing no information (green). 

\begin{figure}
    \centering
    \includegraphics[width=0.48\textwidth]{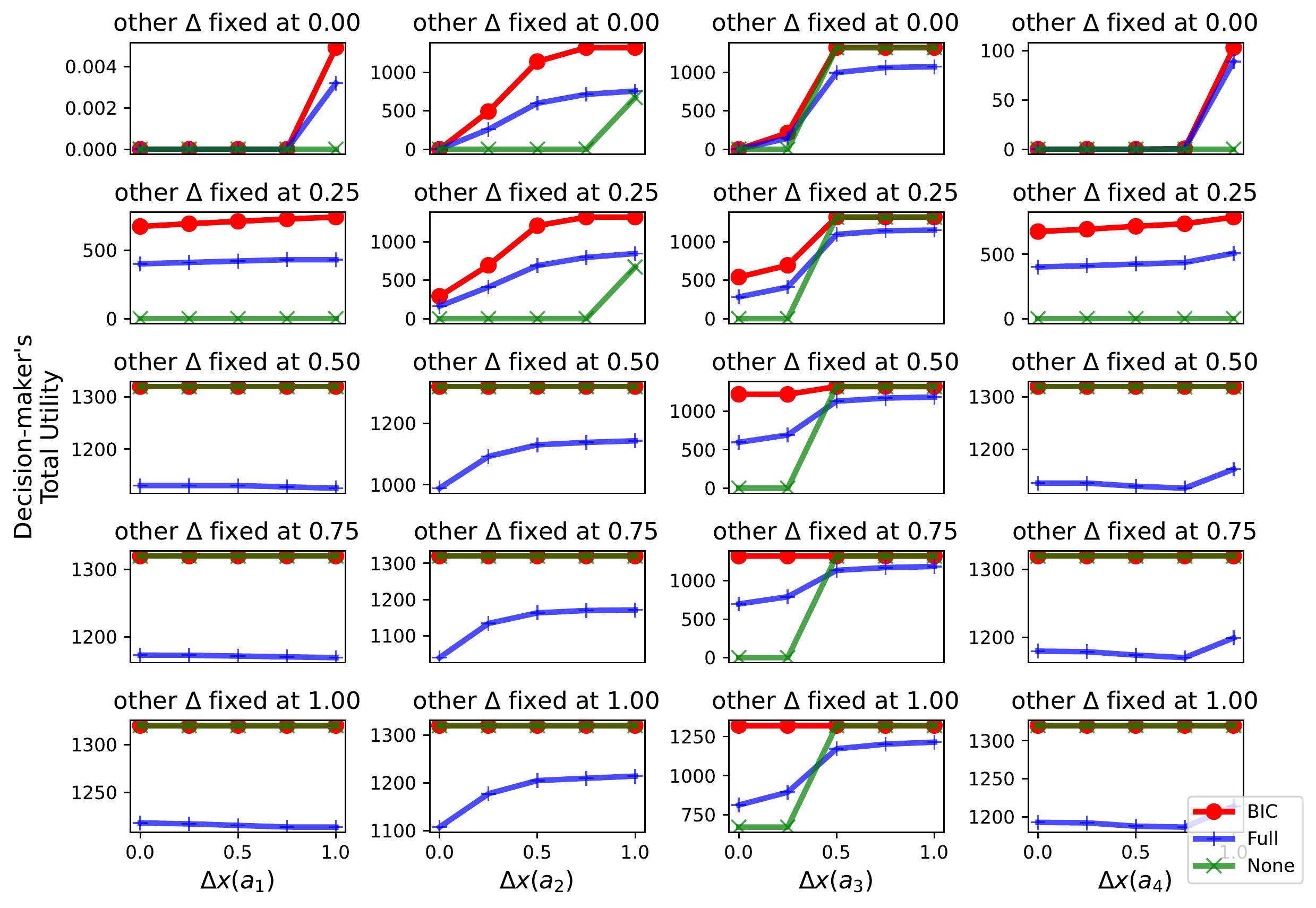}
    \includegraphics[width=0.48\textwidth]{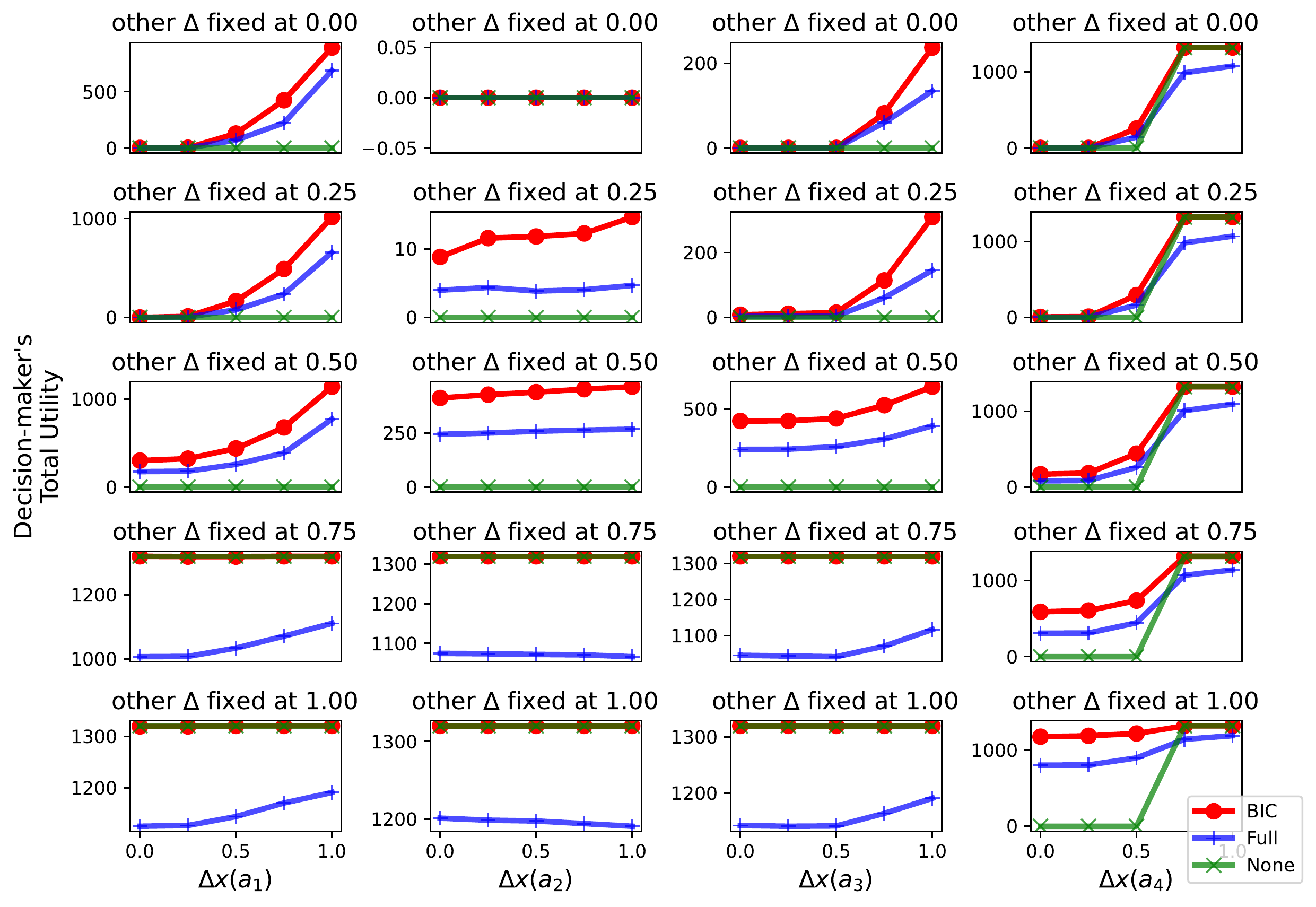}
    \includegraphics[width=0.48\textwidth]{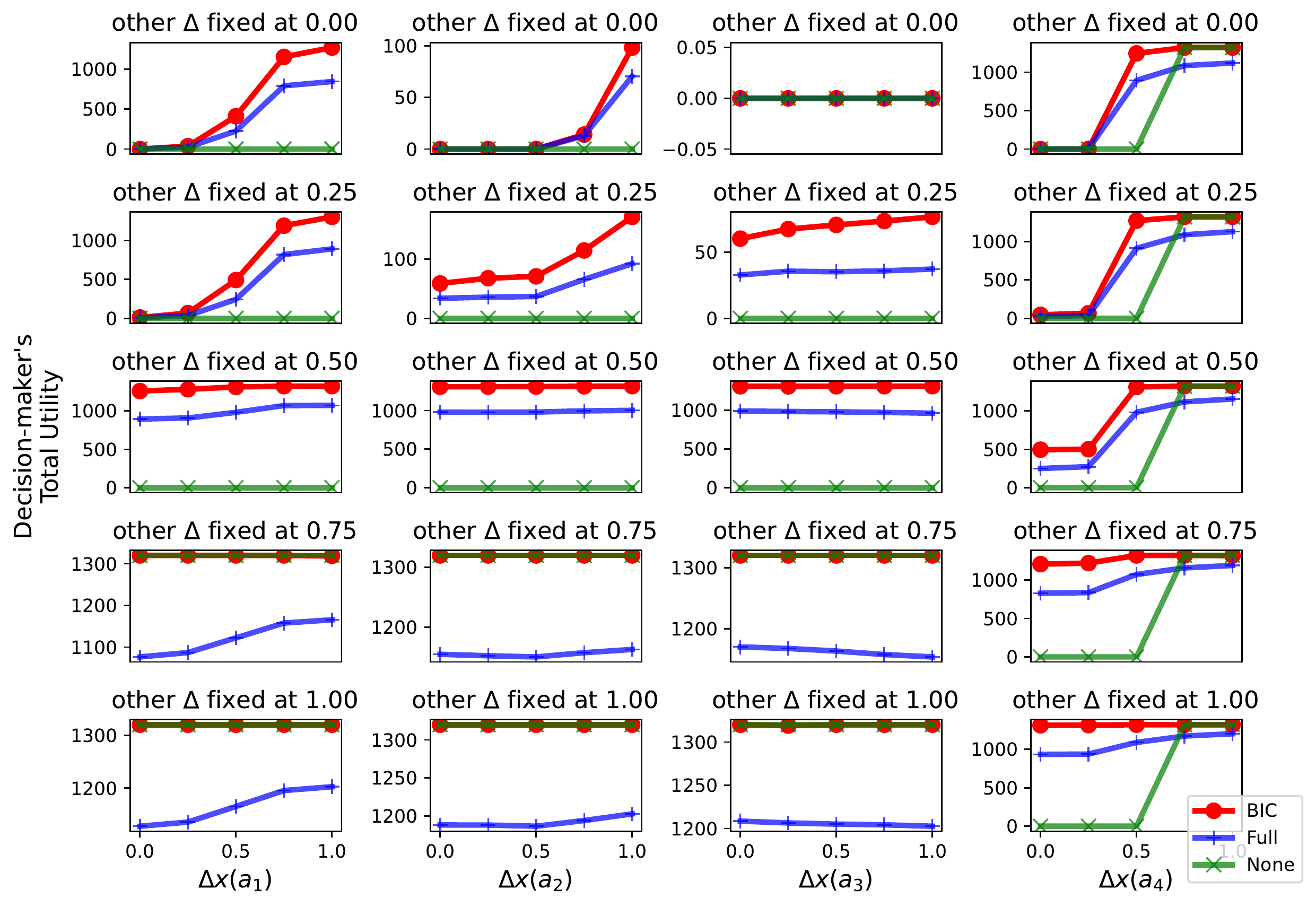}
    \includegraphics[width=0.48\textwidth]{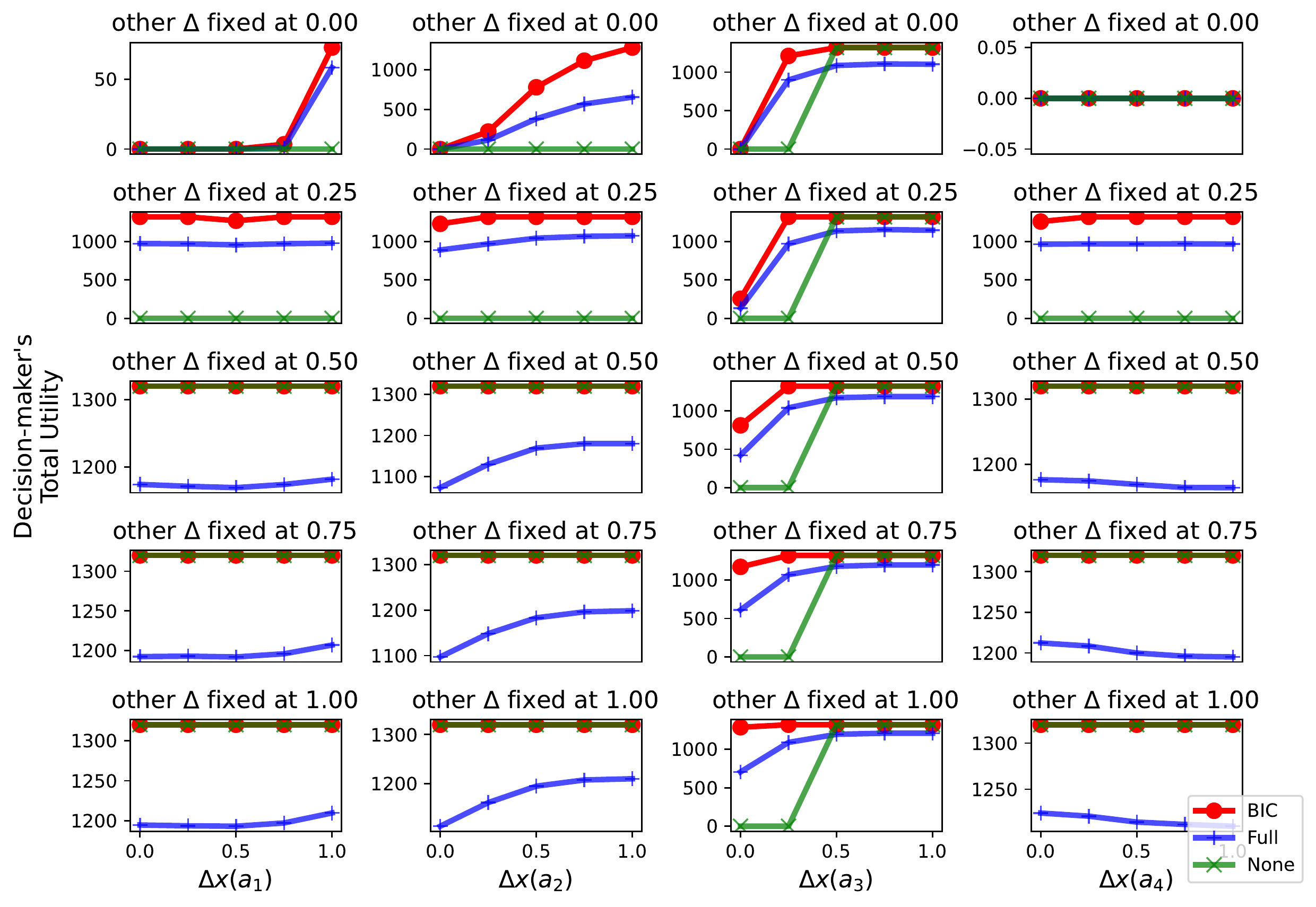}
    \caption{More detailed view on decision maker utility for different $\Delta \vx (a)$ values and cost configurations (i)-(iv) (from upper right to bottom right quadrants). Our optimal signaling policy (red) consistently achieves utility no less than revealing full information (blue) and revealing no information (green) in all settings.}
    \label{fig:additional-details-configs}
\end{figure}

\subsection{Additional results for different utility configurations}

Previously we viewed actions $\{a_1, a_2, a_3, a_4\}$ as equally desirable to the decision maker ($u_{dm}(a_i) = 1$ for $i=1,2,3,4$). 
Figure~\ref{fig:3d-plot-different-utils} and \ref{fig:2d-slices-different-utils} show the results with a different utility profile, $u_{dm}(a_1) = 0.25, u_{dm}(a_2) = 0.5, u_{dm}(a_3) = 0.75, u_{dm}(a_4) = 1$, where each action gives different utility for the decision maker. 
Similar to what we previously observed before, the optimal signaling policy (red) effectively upper-bounds the two baselines, revealing everything (blue) and revealing nothing (green) in all settings. 
As expected, when certain action has higher utility (e.g. $a_4$ compared to $a_1$), the total utility becomes more sensitive to changes in $c(a)$ and $\Delta \mathbf{x}(a)$.

\begin{figure}
    \centering
    \includegraphics[width=\textwidth]{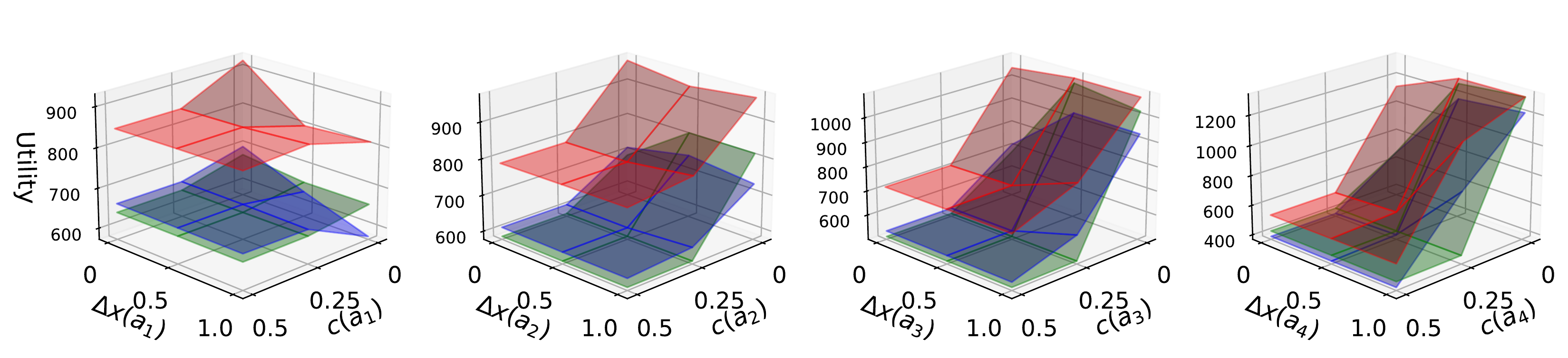}
    \caption{Surface of expected utility across different $c(a)$ and $\Delta \mathbf{x}(a)$ configurations (same setting as in Figure~\ref{fig:3d-plot}, but with different decision maker utility for each action).}
    \label{fig:3d-plot-different-utils}
\end{figure}

\begin{figure}
    \centering
    \includegraphics[width=0.49\textwidth]{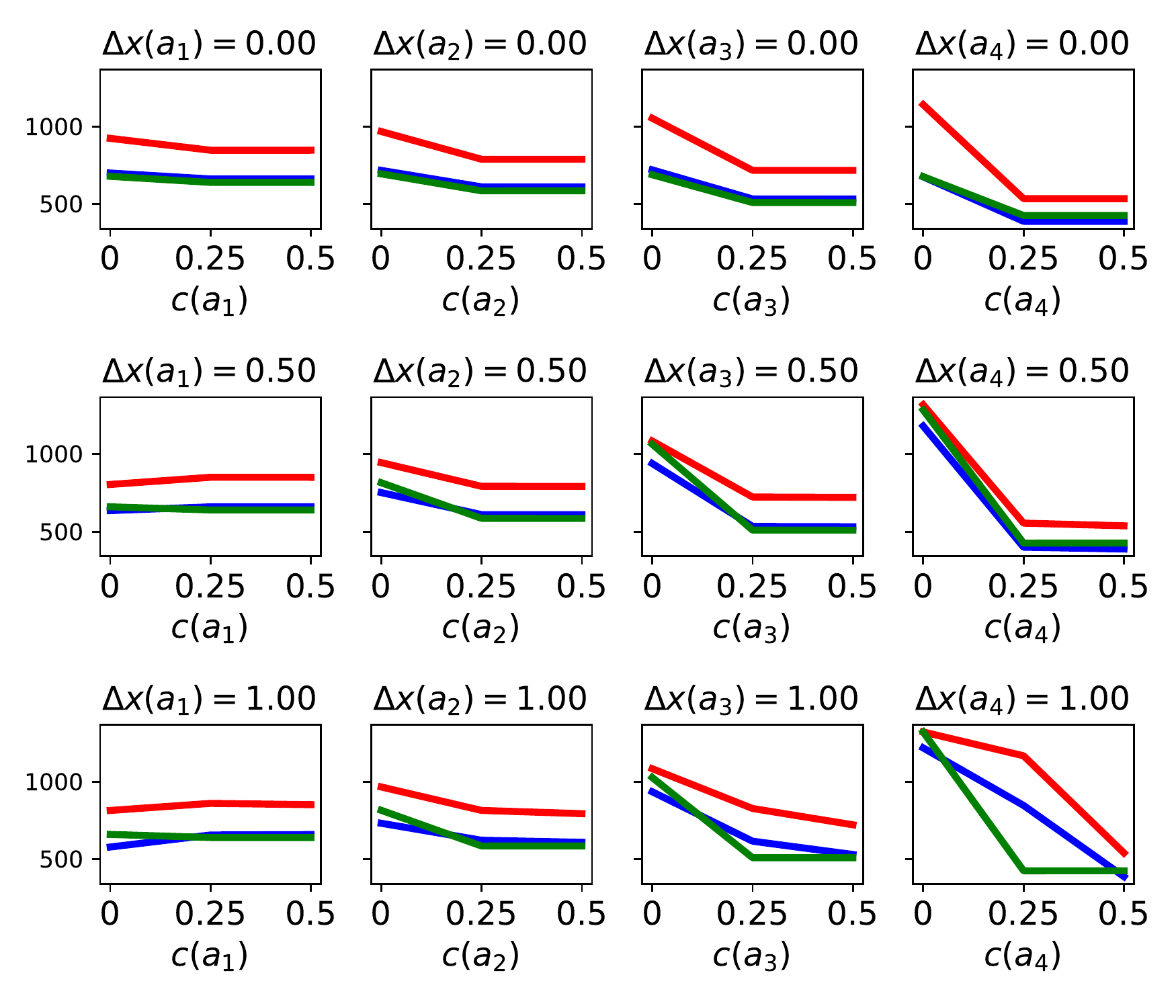}
    \includegraphics[width=0.49\textwidth]{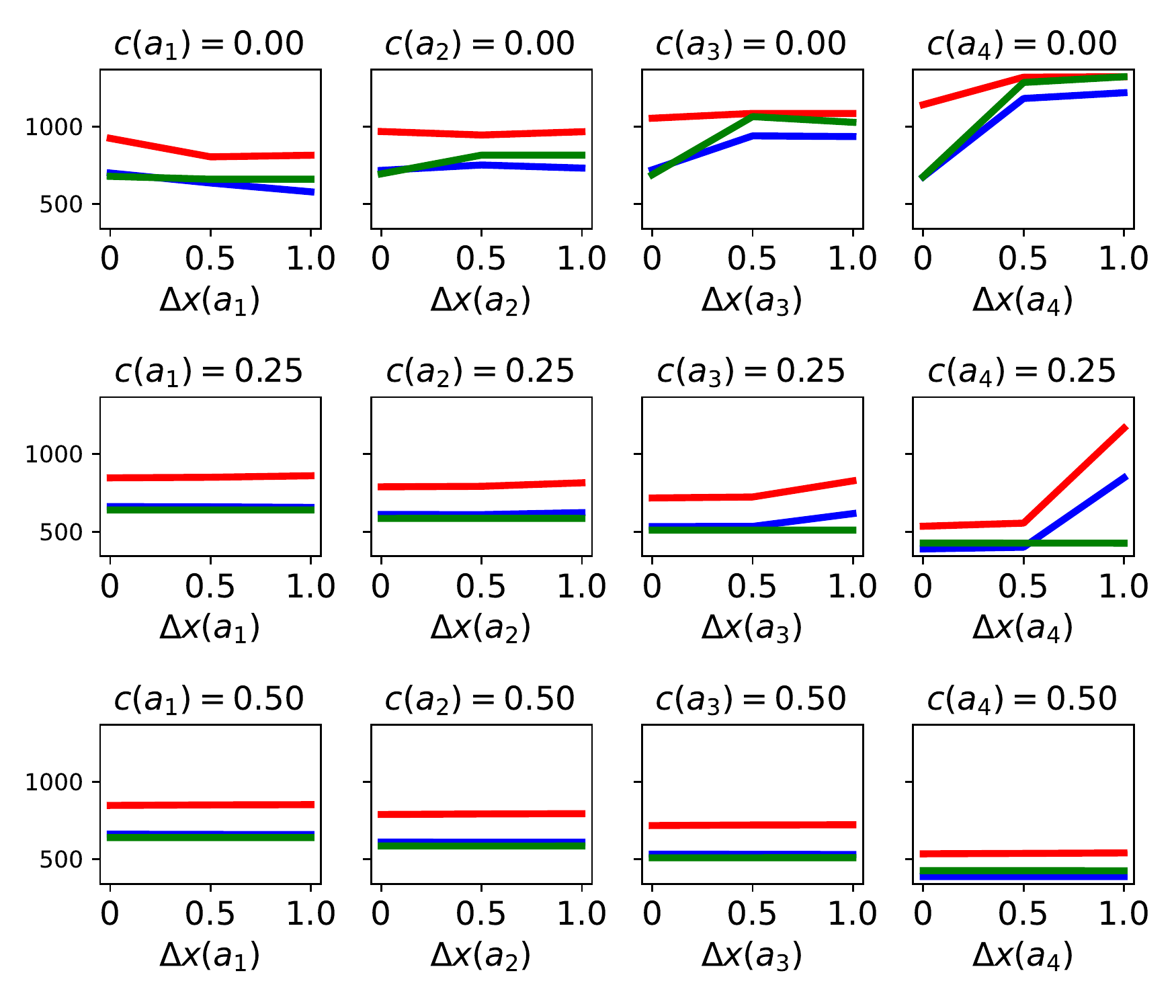}
    \caption{2-D slices of Figure~\ref{fig:3d-plot-different-utils} across $c(a)$ (left) and $\Delta \mathbf{x}(a)$ (right).}
    \label{fig:2d-slices-different-utils}
\end{figure}

\end{document}